\newcommand{\IC}{\mathbb{C}}
\newcommand{\IR}{\mathbb{R}}
\newcommand{\IMM}{\mathscr{M}}
\newcommand{\ILL}{\mathscr{L}}
\newcommand{\IHH}{\mathscr{H}}
\newcommand{\IFF}{\mathscr{F}}
\newcommand{\IJJ}{\mathscr{J}}
\newcommand{\IN}{\mathbb{N}}
\newcommand*{\longhookrightarrow}%
               {\ensuremath{\lhook\joinrel\relbar\joinrel\rightarrow}}
\newcommand{\pa}{\slash\slash}
\newcommand{\Id}{{\rm d}}
\newcommand{\f}{\frac}
\newcommand{\nn}{\nonumber}
\newcommand{\sm}{\sim_b}
\theoremstyle{plain}            
\newtheorem{theorem}{theorem}[section]
\newtheorem{Corollary}[theorem]{Corollary}
\newtheorem{Theorem}[theorem]{Theorem}
\newtheorem{Proposition}[theorem]{Proposition}
\newtheorem{Propandef}[theorem]{Proposition and definition}
\theoremstyle{definition}       
\newtheorem{Definition}[theorem]{Definition}
\newtheorem{Remark}[theorem]{Remark}
\begin{document}

\begin{titlepage}


 \title{Generalized Schr\"odinger semigroups on infinite graphs}

   \author{
Batu G\"uneysu\footnote{Institut f\"ur Mathematik, Humboldt-Universit\"at zu Berlin, Germany. E-mail: gueneysu@math.hu-berlin.de}, Ognjen Milatovic\footnote{Department of Mathematics and Statistics, University of North Florida, Jacksonville, FL 32224, USA. E-mail: omilatov@unf.edu}, Fran\c{c}oise Truc\footnote{Institut Fourier, UMR 5582 du CNRS Universit\'e de Grenoble I BP 74, 38402 Saint-Martin d'H\`eres, France. E-Mail: francoise.truc@ujf-grenoble.fr}
}



\end{titlepage}

\maketitle

\begin{abstract}
With appropriate notions of Hermitian vector bundles and connections over weighted graphs which we allow to be locally infinite, we prove Feynman-Kac-type representations for the corresponding semigroups and derive several applications thereof.
\end{abstract}

\setcounter{page}{1}

\section{Introduction}

The past few years have witnessed the emergence of numerous works devoted to spectral properties of \emph{scalar} Schr\"odinger operators on discrete graphs. In particular, we would like to mention the papers~\cite{Keller-Lenz-09, Keller-Lenz-10}, which developed an abstract and elegant framework of Dirichlet forms on discrete sets and studied questions such as stochastic completeness, absence of essential spectrum, self-adjointness, and the description of generators of the corresponding Schr\"odinger semigroups. We would also like to mention the papers~\cite{Golenia-11,GKS-13,Keller-Lenz-09, Keller-Lenz-10, MilTu} and the references therein, where the authors treat the fundamental problem of defining self-adjoint extensions of scalar (magnetic) Schr\"odinger operators on discrete graphs.

A new aspect of this scalar theory has been treated in~\cite{GKS-13}: There, the authors have proved a Feynman-Kac-It\^o formula for the semigroups corresponding to a very general class of magnetic scalar Schr\"odinger operators on possibly locally infinite weighted discrete graphs. The essential observation here is that the very special path properties of the underlying right-continuous Markov process (it is in fact a jump process) suggests a natural corresponding notion of line integrals with respect to magnetic potentials. Indeed, it turns out that one simply has to include the exponential of this line integral as an extra factor in the classical Feynman-Kac formula (which deals with nonmagnetic scalar Schr\"odinger operators) in order to deal with magnetic scalar Schr\"odinger operators.

On the other hand, as in the manifold case, it is absolutely natural to consider covariant Schr\"odinger operators that act on \emph{sections in vector bundles} over discrete graphs. Indeed, in the last two decades the notion of vector bundles and connections on graphs has attracted quite a bit of attention. For example, the latter setting has been considered in~\cite{Kenyon-11}, where being motivated by the classical matrix-tree theorem, which relates the determinant of the combinatorial Laplacian on a graph to the number of spanning trees, the author generalizes this result to discrete covariant Laplacians that act on one- and two-dimensional vector bundles and interprets the determinant of the operators under consideration in terms of so-called cycle rooted spanning forests. Furthermore, let us point out that a variant of this setting also appears naturally in the theory of molecular bonds~\cite{FKC-S-92}, where it turns out that the vibration modes of a molecule are precisely the eigenvalues of a discrete covariant Schr\"odinger type operator which is invariant under a group of symmetries of the molecule. Finally, let us also mention that covariant Laplacians play an essential role in the analysis of large data sets~\cite{Si-Vu-12}: Namely, the concept of \lq\lq{}vector diffusion maps\rq\rq{}, which as the name suggests is a refinement of usual (= scalar) reduction methods such as diffusion maps, turns out to be a powerful mathematical framework for analyzing large high-dimensional data sets. In this framework, if the vertices $i$ and $j$ of a graph represent (say, two-dimensional) images, then the $2\times 2$ orthogonal matrix  $O_{ij}$, attached to the edge $ij$, rotationally aligns image $i$ with image $j$. Furthermore, the weights $b_{ij}$ provide some measure of affinity between the images $i$ and $j$ when they are optimally aligned.

Now as there exist Feynman-Kac type formulae for covariant Schr\"odinger operators that act on sections in vector bundles over manifolds \cite{G2,G-10}, one might ask whether there is a natural geometric extension of the scalar Feynman-Kac-It\^o formula from~\cite{GKS-13} to the corresponding class of covariant operators on discrete graphs.\emph{ Our main result, Theorem \ref{FK}, states that this is indeed possible:} There, we consider a Hermitian vector bundle $F$ over a weighted graph $(X,b,m)$, together with a unitary $b$-connection $\Phi$ on $F$ (see Section \ref{SS:v-bundle} for the precise definitions), which in spirit of the above particular examples, should assign to any two neighbors $x\sm y$ of $(X,b)$ a unitary map $\Phi_{x,y}:F_x\to F_y$. Additionally, a potential $V$  is naturally given by a family of self-adjoint maps $V(x):F_x\to F_x$, $x\in X$, and we can consider the covariant Schr\"odinger operator which is formally given by
$$
H_{\Phi,V}f(x)=\>\frac{1}{m(x)}\sum_{ y}b(x,y)\Big( f(x)-\Phi_{y,x} f(y)\Big) +V(x)f(x)
$$
in the Hilbert space of square integrable sections $f$ in $F$. To $\Phi$ (and the underlying right-continuous Markov process) we can associate a stochastic parallel transport, and to the pair $(\Phi,V)$ there canonically corresponds a path ordered exponential (see Section \ref{SS:prob-prelim}). In Theorem \ref{FK} we state that the expectation value at time $t$ of the product of these two processes essentially represents the generalized Schr\"odinger semigroup $\mathrm{e}^{-tH_{\Phi,V}}$. Here, we can in fact allow locally infinite weighted graphs as well as a very general class of potentials, which should cover all applications. It should be noted that, when compared with the scalar situation from~\cite{GKS-13}, the core of the proof of Theorem \ref{FK} becomes much more technical as the underlying processes do not commute anymore, so that the required equalities (like semigroup properties etc.) are more subtle. Let us also point out that in the scalar case, our formula precisely boils down to the Feynman-Kac-It\^o formula from~\cite{GKS-13}.\footnote{ which, however, deals with a larger class of $V\rq{}s$ in this situation}

We have also included important spectral theoretic applications of Theorem \ref{FK}, such as Kato-type inequalities for the ground state energy (Theorem \ref{kato}), $\mathsf{L}^p$-smoothing properties of $\mathrm{e}^{-tH_{\Phi,V}}$ (Theorem \ref{lp}), and integral kernel estimates (Proposition \ref{kern}). We would particularly like to mention Theorem \ref{godd}, where we derive a Golden-Thompson type inequality for $\mathrm{e}^{-tH_{\Phi,V}}$, which seems to be new in its stated form in a both \lq\lq{}noncompact\rq\rq{} and vector-valued setting.

We close this section with an outline of the organization of our paper. In sections~\ref{SS:setting} and~\ref{SS:v-bundle} we describe the setting of discrete weighted graphs and the notion of (Hermitian) vector bundle and (unitary) connection. The necessary operator theoretic preliminaries are explained in section~\ref{SS:operator}, where a precise quadratic form definition of $H_{\Phi,V}$ is given (the reader may find various essential self-adjointness results related with operators of the form $H_{\Phi,V}$ in~\cite{MilTu}; see also ~\cite{Guneysu-Post-12,Grummt-Kolb} for an interplay of probability theory and essential self-adjointness in the manifold situation). The underlying probabilistic concepts are presented in section~\ref{SS:prob-prelim}, together with  the main result of the paper, the Feynman-Kac formula (Theorem~\ref{FK}). The above mentioned applications of the latter are contained in section~\ref{SS:applications}. The last part of the paper (section~\ref{beweis}) is devoted to the proof of Theorem~\ref{FK}, and finally, we have included an appendix, where some useful identities and inequalities on path ordered exponentials have been collected.

\section{Main results}\label{S:main}

\subsection{Setting}\label{SS:setting}

We fix a \emph{graph} $(X,b)$, that is, $X$ is a countable set and $b\colon X\times X\to [0,\infty)$ is a symmetric function such that for all $x\in X$ one has $b(x,x)=0$ and
$$
\sum_{y\in X} b(x,y)<\infty.
$$
 We recall that in this context, the elements of $X$ are called the \emph{vertices} of $(X,b)$, the elements $x,\, y\in X$ with $b(x, y) > 0$ are called \emph{neighbors} of $(X,b)$, and this relationship is denoted by $x\sm y$. We assume that $(X,b)$ is \emph{connected}, that is, for any two vertices $x,\,y$ there exists a chain of vertices $x_1,\,x_2,\,\dots,x_n$ such that $x=x_1$, $y=x_n$, and $x_{j}\sm x_{j+1}$ for all $1\leq j\leq n-1$. The complex algebras of complex-valued and complex-valued finitely supported functions on $X$ will be denoted by $\mathsf{C}(X)$ and $\mathsf{C}_c(X)$, respectively. \vspace{1mm}

We furthermore fix an arbitrary function $m:X\to (0,\infty)$ and recall that the triple $(X,b,m)$ is referred to as a (connected) \emph{weighted graph}. Any such $m$ determines a measure $\mu_m$ on the discrete space $X$ through
$$
\mu_m(A):=\sum_{x\in A} m(x)\>\text{ for any $A\subset X$.}
$$
In particular, we get the corresponding complex-valued $L^p$ spaces which will be denoted
by $\ell^p_m (X)$ for any $p\in[1,\infty]$, where obviously $\ell^{\infty}(X):=\ell^{\infty}_m(X)$ does not depend on $m$.

We will also use the notations
\[
\textrm{deg}_{1,b}(x):=\sum_{y\in X}b(x,y),\qquad  \textrm{deg}_{m,b}(x):=\frac{1}{m(x)}\sum_{y\in X}b(x,y),\qquad x\in X.
\]

\subsection{Vector bundles and connections on discrete structures}\label{SS:v-bundle}

The following definitions are central to this paper. We recall the definition of vector bundles and connections over discrete structures:

\begin{Definition}\label{vec} (i) A family of complex linear spaces $F=\bigsqcup_{x\in X}F_x$ is called a (finite-dimensional) \emph{complex vector bundle over $X$} and written $F\to X$, if there is a $\nu\in\IN$ such that for any $x\in X$ one has $F_x\cong \IC^{\nu}$ as complex vector spaces. Then $\mathrm{rank}(F):=\nu$ is called the \emph{rank} of $F\to X$, the $F_x$\rq{}s are called the \emph{fibers} of $F\to X$, and the complex linear space
$$
\Gamma(X,F):=\prod_{x\in X} F_x=\left.\Big\{f\right| f:X\to F, f(x)\in F_x\Big\}
$$
is called the space of \emph{sections} in $F\to X$. We also define the corresponding linear space of \emph{finitely supported sections} $\Gamma_c(X,F)$ in $F\to X$ to be the set of $f\in \Gamma(X,F)$ such that $f(x)=0$ for all but finitely many $x\in X$.\vspace{1mm}

(ii) Let $F\to X$ be a complex vector bundle. An assignment $\Phi$ which assigns to any $x\sm y$ an isomorphism of complex vector spaces $\Phi_{x,y}: F_x\to F_y$ is called a \emph{$b$-connection} on $F\to X$, if one has $\Phi_{y,x}=\Phi^{-1}_{x,y}$ for all $x\sm y$.
\end{Definition}

\begin{Remark} 1. If $F\to X$ is a complex vector bundle, then so are
\begin{align}
& F^*:=\bigsqcup_{x\in X}F_x^*\longrightarrow X,\nn\\
&\mathrm{End}(F):=\bigsqcup_{x\in X}\mathrm{End}(F_x)\longrightarrow X, \nn\\
&F\boxtimes F^*:=\bigsqcup_{(x,y)\in X\times X}F_x\otimes F_y^*=\bigsqcup_{(x,y)\in X\times X}\mathrm{Hom}(F_y,F_x)\longrightarrow X\times X,\nn
\end{align}
where $F_x^*$ stands for the dual space. More generally, all functorial constructions of new complex vector spaces induce new constructions of complex vector bundles over $X$ in the obvious way.

2. The spaces $\Gamma(X,F)$ and $\Gamma_c(X,F)$ become $\mathsf{C}(X)$-modules by pointwise multiplication.
\end{Remark}

We continue with the notion of Hermitian vector bundles and their unitary connections on discrete sets:

\begin{Definition}\label{herm} (i) A family of complex scalar products
$$
(\bullet,\bullet)_x:F_x\times F_x\longrightarrow \IC, \>x\in X,
$$
is called a \emph{Hermitian structure} on the complex vector bundle $F\to X$, and the pair given by $F\to X$ and $(\bullet,\bullet)_{\bullet}$ is called a \emph{Hermitian vector bundle} over $X$. \vspace{1mm}

(ii) Let $F\to X$ be a complex vector bundle with a $b$-connection $\Phi$ defined on it. Then $\Phi$ is called \emph{unitary} with respect to a Hermitian structure $(\bullet,\bullet)_{\bullet}$ on $F\to X$, if for all $x\sm y$ one has $\Phi_{x,y}^*=\Phi^{-1}_{x,y}$ with respect to $(\bullet,\bullet)_{\bullet}$.
\end{Definition}

\begin{Remark} 1. In the situation of Definition \ref{herm}.(i), the norm and the operator norm on $F_x$ corresponding to $(\bullet,\bullet)_x$ will be denoted by $|\bullet|_x$, and $|\bullet|_{x,y}$ will stand for the operator norm on $\mathrm{Hom}(F_x,F_y)$, so that $|\bullet|_{x,x}=|\bullet|_{x}$.

2. We will stick to the usual abuse of notation and 
suppress the underlying Hermitian structure of a given Hermitian vector bundle over $X$ from now on. This will cause no danger of confusion.

\end{Remark}

\subsection{Operators and semigroups under consideration}\label{SS:operator}

Throughout this section, we fix a Hermitian vector bundle $F\to X$. \vspace{1mm}

Generalizing the scalar $\ell^p_m(X)$ spaces, we get for any $q\in [1,\infty)$ the corresponding complex Banach space $\Gamma_{\ell^q_m}(X,F)$ of $\ell^q_m$-sections in $F\to X$ as follows:

A section $f\in \Gamma(X,F)$ is in $\Gamma_{\ell^q_m}(X,F)$, if and only if
$$
\left\|f\right\|_{m;q}:=\left(\sum_{x\in X} |f(x)|^{q}_xm(x)\right)^{\f{1}{q}}<\infty.
$$
In particular, $\Gamma_{\ell^2_m}(X,F)$ becomes a complex Hilbert space, whose scalar product will be denoted by $\left\langle\bullet,\bullet\right\rangle_{m}$, and we use the convention
$\left\|\bullet\right\|_{m}:=\left\|\bullet\right\|_{m;2}$. We also have the complex Banach space $\Gamma_{\ell^{\infty}}(X,F)$ of bounded sections in $F\to X$, with norm
$$
\> \left\|f\right\|_{\infty}:=\sup_{x\in X} |f(x)|_x.
$$
It is clear that $\Gamma_{c}(X,F)$ is dense in $\Gamma_{\ell^q_m}(X,F)$ for all $q\in [1,\infty)$. The corresponding operator norms will also be denoted by   $\left\|\bullet\right\|_{m;q}$ and $\left\|\bullet\right\|_{\infty}$, and the corresponding operator norm on the space of operators from $\ell^q_m$-sections to $\ell^{q\rq{}}_m$-sections will be denoted by $\left\|\bullet\right\|_{m;q,q\rq{}}$.\vspace{1mm}

Next, we fix a unitary $b$-connection $\Phi$ on $F\to X$. \vspace{1mm}

Then we can define a sesqui-linear form $Q_{\Phi,0}\rq{}$ in $\Gamma_{\ell^2_m}(X,E)$ with domain of definition $\Gamma_{c}(X,F)$ by setting
\begin{align}
Q_{\Phi,0}\rq{}(f_1,f_2)=&\>\frac{1}{2}\sum_{x\sm y}b(x,y)\Big( f_1(x)-\Phi_{y,x} f_1(y),f_2(x)-\Phi_{y,x} f_2(y)\Big)_{x}.\nn
\end{align}
It follows that $Q_{\Phi,0}\rq{}$ is densely defined, symmetric, closable, and nonnegative, and its closure will be denoted by $Q_{\Phi,0}$. We continue with the following:

\begin{Definition}\label{pot} A section $V\in \Gamma(X,\mathrm{End}(F))$ is called a \emph{potential} on $F\to X$, if the operator $V(x):F_x\to F_x$ is self-adjoint for all $x\in X$.
\end{Definition}

In the above situation, we will write $V\geq w$ for a function $w:X\to\IR$, if one has
$$
(V(x)f,f)_x\geq w(x)|f|^2_x \>\>\text{ for all $x\in X$, $f\in F_x$.}
$$

We also fix an arbitrary potential $V$ on $F\to X$ in the sequel. \vspace{1mm}

It 
is easily seen that any such $V$ defines a closed symmetric sesqui-linear form $Q_V$ in $\Gamma_{\ell^2_m}(X,F)$ with
$$
\mathsf{D}(Q_V)=\left.\Big\{f\right|f\in \Gamma_{\ell^2_{m}}(X,F),\,(V(\bullet)f(\bullet),f(\bullet))_{{\bullet}}\in \ell^1_{m}(X)\Big\}
$$
through
$$
Q_V(f_1,f_2)=\sum_{x\in X} ( V(x)f_1(x),f_2(x))_{x}m(x).
$$

Whenever $V$ admits a decomposition $V=V^{+}-V^{-}$ into potentials $V^{\pm}\geq 0$ such that $Q_{V^{-}}$ is $Q_{\Phi,0}$-bounded with bound $<1$, then the KLMN-theorem
implies that the 
sesqui-linear form $Q_{\Phi,V}:=Q_{\Phi,0}+Q_{V}$ with $\mathsf{D}(Q_{\Phi,V})=\mathsf{D}(Q_{\Phi,0})\cap \mathsf{D}(Q_{V})$ is densely defined, symmetric, and semi-bounded\footnote{from now on, \lq\lq{}semi-bounded\rq\rq{} will aways refer to \lq\lq{}semibounded from below\rq\rq{}}. Note that in the above situation we have $\mathsf{D}(Q_{\Phi,V})=\mathsf{D}(Q_{\Phi,0})\cap \mathsf{D}(Q_{V^{+}})$. The self-adjoint semi-bounded operator corresponding to $Q_{\Phi,V}$ will be denoted with $H_{\Phi,V}$, and motivated by \cite{Si0,G2}, we will call
$$
(\mathrm{e}^{-t H_{\Phi,V}})_{t\geq 0}\subset \ILL(\Gamma_{\ell^2_m}(X,F))
$$
the \emph{generalized Schr\"odinger semigroup} corresponding to $\Phi$ and $V$, where the notion \lq\lq{}generalized\rq\rq{} refers to the possibly vector-valued character of our setting.

\begin{Remark}\label{sc} 1. As a particular case of the above construction, we can take $F_x=\{x\}\times\IC$ with its canonic Hermitian structure. Then the sections
 in $F\to X$ can be canonically identified with complex-valued functions on $X$, and under this identification, any $\Phi$ can uniquely be written as $\Phi(x,y)=\mathrm{e}^{\mathrm{i}\theta(x,y)}$, where $\theta$ is a magnetic potential on $(X,b)$, that is, an antisymmetric function on the edges of $(X,b)$ which takes values in $[-\pi,\pi]$. Furthermore, any potential can be uniquely identified with a function $v:X\to \IR$. Thus, as a particular case of the above construction, we get the corresponding quadratic forms $Q_{\theta,v}$ (for appropriate $v$\rq{}s) in $\ell^2_m(X)$. The semibounded self-adjoint operator given by $Q_{\theta,v}$ will be denoted with $H_{\theta,v}$. This situation has been considered in \cite{GKS-13}.

2. The regular Dirichlet form $Q:=Q_{0,0}$ in $\ell^2_m(X)$ and the associated operator $H:=H_{0,0}$ will play a distinguished role in the sequel, for they generate the underlying free scalar evolution (cf.~formula (\ref{prob}) below). We shall denote with
$$
[0,\infty)\times X\times X\longrightarrow (0,\infty),\>(t,x,y)\longmapsto p(t,x,y)
$$
the integral kernel corresponding to $(\mathrm{e}^{-t H})_{t\geq 0}\subset \ILL(\ell^2_m(X))$, that is,
\begin{align}
\mathrm{e}^{-t H}f(x)=\sum_{y\in X} p(t,x,y) f(y) m(y)\>\text{ for all $t\geq 0$, $f\in \ell^2_m(X)$, $x\in X$.} \label{eded}
\end{align}
As $Q$ is a Dirichlet form, $\mathrm{e}^{-t H}$ also operates as
$$
(\mathrm{e}^{-t H})_{t\geq 0}\subset \ILL(\ell^q_m(X)), \>q\in [1,\infty].
$$
In fact, as we have
$$
\sum_{y\in X}p(t,x,y)m(y)\leq 1,
$$
formula (\ref{eded}) holds verbatim for any $f\in \ell^q_m(X)$.
\end{Remark}

\subsection{Probabilistic preliminaries and the Feynman-Kac formula}\label{SS:prob-prelim}

Let us start by recalling a convenient construction of a right Markov process $\IMM$ with values in $X$ (equipped with the discrete topology) which is associated to $Q$. We follow~\cite[Section 3]{GKS-13} very closely here, where the reader may also find some more details on and references about the construction of $\IMM$.

To this end, let $(\Omega,\mathscr{F},\mathbb{P})$ be a fixed probability space, 
  let $\mathbb{N}_{0}$ be the set $\{0,1,2,\dots\}$, and let $(Y_n)_{n\in\mathbb{N}_0}$ be a discrete time Markov chain with values in $X$ such that
\begin{align}
\mathbb{P}(Y_n=x|Y_{n-1}=y)=\frac{b(x,y)}{\textrm{deg}_{1,b}(x)}\qquad \textrm{for all }n\in\IN.\label{trans}
\end{align}
Let $(\xi_n)_{n\in\mathbb{N}_{0}}$ be a sequence of independent exponentially distributed random variables with parameter $1$, which are independent of $(Y_n)_{n\in\mathbb{N}_{0}}$. For $n\in\IN$ we define a sequence of stopping times
\[
S_n:=\frac{1}{\textrm{deg}_{m,b}(Y_{n-1})}\xi_n,\qquad \tau_n:=S_1+S_2+\dots+S_n,
\]
where $\tau_0:=0$. Next, we define the predictable stopping time
$$
\tau:=\sup\{\tau_n|n\in\mathbb{N}_{0} \}  >0,
$$
so that finally, we arrive at the maximally defined, right-continuous process
$$
\mathbb{X}\colon [0,\tau)\times \Omega\longrightarrow  X,\>\>
\mathbb{X}|_{[\tau_n,\tau_{n+1})\times \Omega}:=Y_n,\>\>n\in\mathbb{N}_{0},
$$
which has the $\tau_n$\rq{}s as its jump times and the $S_n$\rq{}s as its holding times. If we define a family of probability measures on $\IFF$ and a filtration of $\IFF$, respectively, by
$$
\mathbb{P}^{x}:=\mathbb{P}(\bullet|\, \mathbb{X}_{0}=x),\>\> \IFF_{t}:=\sigma(\mathbb{X}_{s}|\, s\leq t),
$$
then
\[
\IMM:=(\Omega, \IFF,  (\IFF_{t})_{t\geq 0}, \mathbb{X}, (\mathbb{P}^x)_{x\in X})
\]
is a reversible strong Markov process.

\begin{Remark} 1. $\IMM$ is associated to $Q$ in the sense that
\begin{align}
p(t,x,y)m(y)=\mathbb{P}^x(\mathbb{X}_{t}=y)\>\text{ for all $t\geq 0$, $x,y\in X$.}\label{prob0}
\end{align}
In particular, for any
$q\in [1,\infty]$ and any $f\in\ell^q_{m}(X)$ one has
\begin{align}
\mathrm{e}^{-t H}f(x)=\mathbb{E}^x\left[1_{\{t<\tau\}} f(\mathbb{X}_{t})\right].\label{prob}
\end{align}
2. It follows easily from (\ref{trans}) and the definition of $\mathbb{X}$ that
\begin{align}
\mathbb{P}\Big(b(\mathbb{X}_{\tau_n},\mathbb{X}_{\tau_{n+1}}) >0\text{ for all $n\in\IN_0$}\Big) =1\label{jump}
\end{align}
which precisely means that $\mathbb{X}$ can only jump to neighbors $\mathbb{P}$-a.s.
\end{Remark}

Next, let us introduce the process
$$
N(t):=\sup\{n\in\IN_0 | \tau_n\leq t\}:\Omega\longrightarrow \IN_0\cup \{\infty\},
$$
which at a fixed time $t\geq 0$ counts the jumps of $\mathbb{X}$ until $t$, so that
\begin{align}
\{N(t)<\infty\}=\{t<\tau\}\>\text{ for all $t\geq 0$.}\label{bez}
\end{align}

The following notation will be useful in the sequel: If $\{A_1,\dots,A_n\}$ is a set of linear maps, then, whenever it makes sense, we will write
$$
\prod^{\longleftarrow}_{1\leq j\leq n} A_j:=A_n\cdots A_1,
$$
that is, the $A_j$'s are being ordered decreasingly with respect to their indices.

There is a canonic process which is induced by $\Phi$:

\begin{Definition} The process
\begin{align*}
&\pa^{\Phi}: [0,\tau)\times \Omega\longrightarrow F \boxtimes F^*\\
&\pa^{\Phi}_t:=
\begin{cases}
\mathbf{1}_{\mathbb{X}_0}:=\mathbf{1}_{F_{\mathbb{X}_0}},\> \text{ if $N(t)=0$}\\ \\
\prod^{\longleftarrow}_{1\leq j\leq N(t) } \Phi_{\mathbb{X}_{\tau_{j-1}},\mathbb{X}_{\tau_{j}}}\>\text{ else}
\end{cases}\in \mathrm{Hom}(F_{\mathbb{X}_0},F_{\mathbb{X}_t})
\end{align*}
is called the \emph{$\Phi$-parallel transport} along the paths of $\mathbb{X}$.
\end{Definition}

Note that $\pa^{\Phi}$ is well-defined by (\ref{jump}) and (\ref{bez}) (remember here that $\Phi$ is only defined on neighbors) and that for each $t\geq 0$ and any sample $\omega\in \{t<\tau\}$, the operator
$$
\pa^{\Phi}_t(\omega)\in \mathrm{Hom}(F_{\mathbb{X}_0(\omega)},F_{\mathbb{X}_t(\omega)})\>\text{ is unitary},
$$
a property that will be used implicitly and that will turn out to be central in what follows. We will denote its inverse by $//_{t}^{\Phi,-1} (\omega)$.

Let us now take the potential $V$ into account. We define the self-adjoint process
\begin{align*}
&V^{\Phi}:[0,\tau)\times\Omega\longrightarrow \mathrm{End}(F),\\
&V^{\Phi}_t(\omega):=\left.-//^{\Phi,-1}_tV(\mathbb{X}_t)//^{\Phi}_t\right|_{\omega}\in \mathrm{End}(F_{\mathbb{X}_0(\omega)}).
\end{align*}
Let $\omega\in\Omega$, $0\leq t\leq \tau(\omega)$. As we have
\begin{align}
\int^t_0\left|V^{\Phi}_s(\omega)\right|_{\mathbb{X}_{0}(\omega)}\Id s\leq\sum^{N(t)}_{j=0} \left|V(\mathbb{X}_{\tau_j})\right|_{\mathbb{X}_{\tau_j}}(\tau_{j+1}-\tau_j)\mid_{\omega},\nn
\end{align}
formula (\ref{bez}) implies
\begin{align}
V^{\Phi}_{\bullet}(\omega)\in \mathsf{L}^1_{\mathrm{loc}}[0,\tau(\omega))\otimes\mathrm{End}(F_{\mathbb{X}_0(\omega)}).\label{loci}
\end{align}

Let us consider the initial value problem

\begin{align}
 \mathscr{V}^{\Phi}_{t}(\omega) &=\mathbf{1}_{\mathbb{X}_0(\omega)}+ \int^t_0\mathscr{V}^{\Phi}_{s}(\omega)V^{\Phi}_s(\omega)\Id s \>\>\text{ in $\mathrm{End}(F_{\mathbb{X}_0(\omega)})$.}\label{int}
\end{align}

Using (\ref{loci}) 
and  Proposition \ref{app1} (i) (see Appendix) we immediately get the following result:

\begin{Proposition}\label{poe} Let $\omega\in \Omega$ and $0\leq t\leq \tau(\omega)$. Then there is a unique solution
$$
\mathscr{V}^{\Phi}_{\bullet}(\omega):[0,\tau(\omega))\longrightarrow \mathrm{End}(F_{\mathbb{X}_0(\omega)})
$$
of (\ref{int}). In fact, it is given by the path ordered exponential
\begin{align}
\mathscr{V}_{t}^{\Phi}(\omega)=\mathbf{1}_{\mathbb{X}_0(\omega)}+\sum_{n=1}^{\infty}\int_{t\sigma_n}V_{s_1}^{\Phi}(\omega)\cdots V_{s_n}^{\Phi}(\omega) \,\Id s_1\,\cdots\,\Id s_n ,\label{poi}
\end{align}
where
\[
t\sigma_n:=\{(s_1,s_2,\dots,s_n)| \>0\leq s_1\leq s_2\leq\cdots\leq s_n\leq t\}
\]
denotes the $t$-scaled standard $n$-simplex in $\IR^n$, and where the series converges absolutely and locally uniformly in $t$. More specifically, one has the following a priori bound\emph{:} For all $n\in \IN$,
\begin{align}
&\int_{t\sigma_n}\left|V_{s_1}^{\Phi}(\omega)\cdots V_{s_n}^{\Phi}(\omega)\right|_{\mathbb{X}_0(\omega)} \,\Id s_1\,\cdots\,\Id s_n \leq \f{\left(\int^t_0 |V^{\Phi}_s(\omega)|_{\mathbb{X}_0(\omega)}\Id s\right)^n }{ n!}.\nn
\end{align}
\end{Proposition}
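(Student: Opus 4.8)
The plan is to reduce the statement to the abstract existence/uniqueness theorem for linear Volterra equations with an $\mathsf{L}^1_{\mathrm{loc}}$ coefficient in a finite-dimensional Banach algebra, i.e.\ Proposition \ref{app1}(i), applied to $A_s:=V^{\Phi}_s(\omega)$ in the algebra $\mathrm{End}(F_{\mathbb{X}_0(\omega)})$. Fixing $\omega\in\Omega$ and abbreviating $\mathbf{1}:=\mathbf{1}_{\mathbb{X}_0(\omega)}$, note that $\mathrm{End}(F_{\mathbb{X}_0(\omega)})\cong\mathrm{End}(\IC^{\mathrm{rank}(F)})$ is a finite-dimensional unital Banach algebra under the submultiplicative operator norm $|\bullet|_{\mathbb{X}_0(\omega)}$, and that by (\ref{loci}) the curve $s\mapsto A_s$ lies in $\mathsf{L}^1_{\mathrm{loc}}[0,\tau(\omega))$ with values in this algebra. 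Since this is exactly the hypothesis of Proposition \ref{app1}(i), the result follows; below I indicate how that abstract fact is proved by the standard Dyson/Neumann-series scheme, the one nonstandard point being that $A$ need not be continuous.

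First I would introduce the iterated integrals
\[
I_0(t):=\mathbf{1},\qquad I_n(t):=\int_{t\sigma_n}A_{s_1}\cdots A_{s_n}\,\Id s_1\cdots\Id s_n\quad(n\geq1),
\]
and observe, by fixing the largest variable $s_n=s$ (so that $(s_1,\dots,s_{n-1})$ then ranges over $s\sigma_{n-1}$), the recursion $I_n(t)=\int_0^t I_{n-1}(s)\,A_s\,\Id s$. Using submultiplicativity of the norm and the fact that the integrand $|A_{s_1}|_{\mathbb{X}_0(\omega)}\cdots|A_{s_n}|_{\mathbb{X}_0(\omega)}$ is symmetric in its arguments — so that integrating it over $t\sigma_n$ produces $1/n!$ of its integral over the cube $[0,t]^n$ — I would obtain the a priori bound
\[
|I_n(t)|_{\mathbb{X}_0(\omega)}\;\leq\;\frac{1}{n!}\left(\int_0^t |A_s|_{\mathbb{X}_0(\omega)}\,\Id s\right)^{n},
\]
which is precisely the estimate claimed in the Proposition. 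Because $\int_0^t|A_s|_{\mathbb{X}_0(\omega)}\,\Id s<\infty$ for every $t<\tau(\omega)$ and this quantity is nondecreasing in $t$, summing the bound gives $\sum_{n\geq0}|I_n(t)|_{\mathbb{X}_0(\omega)}\leq\exp\!\big(\int_0^T|A_s|_{\mathbb{X}_0(\omega)}\,\Id s\big)$ uniformly for $t$ in any compact $[0,T]\subset[0,\tau(\omega))$, so the series $\mathscr{V}_t^{\Phi}(\omega):=\sum_{n\geq0}I_n(t)$ converges absolutely and locally uniformly in $t$; this establishes (\ref{poi}).

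It remains to check that this series actually solves (\ref{int}) and that the solution is unique. Summing the recursion over $n\geq1$ and interchanging sum and integral — legitimate by dominated convergence, since $\sum_{k\geq0}|I_k(s)|_{\mathbb{X}_0(\omega)}|A_s|_{\mathbb{X}_0(\omega)}\in\mathsf{L}^1[0,t]$ by the bound above — yields $\mathscr{V}_t^{\Phi}(\omega)-\mathbf{1}=\int_0^t\mathscr{V}_s^{\Phi}(\omega)\,A_s\,\Id s$, which is (\ref{int}). For uniqueness, if $Y$ and $\tilde Y$ both solve (\ref{int}) then $Z:=Y-\tilde Y$ satisfies $|Z_t|_{\mathbb{X}_0(\omega)}\leq\int_0^t|Z_s|_{\mathbb{X}_0(\omega)}|A_s|_{\mathbb{X}_0(\omega)}\,\Id s$, and the integral form of Gr\"onwall's inequality, which is valid for an $\mathsf{L}^1_{\mathrm{loc}}$ coefficient, forces $Z\equiv0$.

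The only genuine subtlety — and the reason one cannot just invoke classical smooth ODE theory — is the merely $\mathsf{L}^1_{\mathrm{loc}}$ regularity of $s\mapsto V^{\Phi}_s(\omega)$, which comes from the jumps of $\mathbb{X}$ and the possible unboundedness of $x\mapsto|V(x)|_x$ along the trajectory; this is exactly what (\ref{loci}) (via (\ref{bez})) is there to guarantee, and it is handled by working throughout with the integral equation (\ref{int}) rather than a pointwise differential equation, and by using the integral forms of the term-by-term manipulations and of Gr\"onwall's lemma. I expect that this bookkeeping — making sure every interchange of limit and integral is dominated — is the main thing to get right; the rest is the textbook argument for the path-ordered exponential in a finite-dimensional algebra.
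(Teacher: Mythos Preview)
Your proposal is correct and follows exactly the paper's approach: the paper's entire proof is to invoke (\ref{loci}) to verify the $\mathsf{L}^1_{\mathrm{loc}}$ hypothesis and then cite Proposition~\ref{app1}(i) from the appendix. Your additional sketch of the Dyson-series argument behind Proposition~\ref{app1}(i) is standard and accurate, but goes beyond what the paper itself supplies.
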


In view of Proposition \ref{poe}, we have canonically associated the process
$$
\mathscr{V}^{\Phi}:[0,\tau)\times\Omega\longrightarrow \mathrm{End}(F)
$$
with $\Phi$ and $V$, where clearly for any $x\in X$, $t\geq 0$, one has
\begin{align*}
\pa^{\Phi}_t(\omega)\in \mathrm{Hom}(F_{x},F_{\mathbb{X}_t(\omega)}),\> \mathscr{V}^{\Phi}_{t}(\omega)\in\mathrm{End}(F_{x})\>\text{ for $\mathbb{P}^x$-a.e. $\omega\in\{t<\tau\}$}.
\end{align*}

\begin{Remark}\label{sdf} In the scalar situation of Remark \ref{sc}.1, we have (with an obvious notation)
$$
\pa^{\theta}_t=\mathrm{e}^{\mathrm{i}\int^t_0\theta(\Id \mathbb{X}_s )}, \>\text{with }\>\int^t_0\theta(\Id \mathbb{X}_s ):=\sum^{N(t)}_{j=1}\theta(\mathbb{X}_{\tau_{j-1}},\mathbb{X}_{\tau_{j}}): [0,\tau)\times\Omega\to \IR
$$
the stochastic line integral of $\theta$ along $\mathbb{X}$ (cf. \cite{GKS-13}). Thus, as everything commutes, for any potential $v:X\to\IR$ one has that $\mathscr{V}^{\theta}_t=\mathrm{e}^{-\int^t_0v(\mathbb{X}_s)\Id s}$ in fact does not depend on $\theta$. This makes some of the calculations in the scalar situation fairly simple. For instance, the proof of claim (i) from section \ref{SS:pf-fki-finite} becomes trivial in the scalar case, whereas the general case is very technical.
\end{Remark}

With these preparations, the following Feynman-Kac type path integral formula for the generalized Schr\"odinger semigroups under consideration is our main result:

\begin{Theorem}\label{FK} Assume that $V$ admits a decomposition $V=V^{+}-V^{-}$ into potentials $V^{\pm}\geq 0$ such that $Q_{|V^-|}$ is $Q$-bounded with bound $<1$. Then $Q_{V^{-}}$ is $Q_{\Phi,0}$-bounded with bound $<1$, and for any $f\in\Gamma_{\ell^2_m}(X,F)$, $t\geq 0$, $x\in X$ one has
\begin{align}
\mathrm{e}^{-tH_{\Phi,V}}f(x)=\mathbb{E}^x\left[1_{\{t<\tau\}}\mathscr{V}_{t}^{\Phi}//_{t}^{\Phi,-1}f(\mathbb{X}_{t})\right].\label{fds}
\end{align}

\end{Theorem}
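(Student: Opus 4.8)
The auxiliary assertion that $Q_{V^{-}}$ is $Q_{\Phi,0}$-bounded with bound $<1$ is immediate from the diamagnetic inequality: unitarity of each $\Phi_{y,x}$ gives the pointwise bound $|f(x)-\Phi_{y,x}f(y)|_{x}\ge\big||f(x)|_{x}-|f(y)|_{y}\big|$, whence $Q_{\Phi,0}(f,f)\ge Q(|f|,|f|)$ first on $\Gamma_{c}(X,F)$ and, by closability of $Q_{\Phi,0}'$, on all of $\mathsf{D}(Q_{\Phi,0})$; together with $Q_{V^{-}}(f,f)\le Q_{|V^{-}|}(|f|,|f|)$ this transfers the hypothesis to $Q_{V^{-}}$ with the same bound, so that $H_{\Phi,V}$ is well defined. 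I would then prove the formula \eqref{fds} by reducing, first, to bounded $V$ and, within that, to Dirichlet restrictions to finite subsets.

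\emph{Stage 1: finite subsets.} Fix a finite $U\subseteq X$, let $H_{\Phi,V}^{U}$ be the self-adjoint operator on the finite-dimensional space $\Gamma_{\ell^{2}_{m}}(U,F)$ associated with the restriction of $Q_{\Phi,V}$ to sections supported in $U$, and let $\tau_{U}:=\inf\{t\ge0:\mathbb{X}_{t}\notin U\}$. The claim to be proved is
\[
\mathrm{e}^{-tH_{\Phi,V}^{U}}f(x)=\mathbb{E}^{x}\big[1_{\{t<\tau_{U}\}}\,\mathscr{V}_{t}^{\Phi}\,\pa_{t}^{\Phi,-1}f(\mathbb{X}_{t})\big],\qquad x\in U,
\]
which for $U=X$ finite (so $\tau_{U}=\infty$ a.s.) is exactly \eqref{fds}. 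Writing $P^{U}_{t}$ for the right-hand side and noting, via Proposition \ref{poe}, that $\|P^{U}_{t}\|\le\mathrm{e}^{t\|V\|_{\infty}}$, it suffices to show that $(P^{U}_{t})_{t\ge0}$ is a strongly continuous semigroup with $P^{U}_{0}=\mathrm{id}$ whose right derivative at $t=0$ is $-H_{\Phi,V}^{U}$; by the theory of one-parameter semigroups on a finite-dimensional space this forces $P^{U}_{t}=\mathrm{e}^{-tH_{\Phi,V}^{U}}$. The semigroup property rests on the strong Markov property of $\IMM$ applied at time $t$ (with shift $\theta_{t}$; here $1_{\{t+s<\tau_{U}\}}=1_{\{t<\tau_{U}\}}\cdot(1_{\{s<\tau_{U}\}}\circ\theta_{t})$ since $\tau_{U}$ is a terminal time), combined with the multiplicative cocycle $\pa_{t+s}^{\Phi,-1}=\pa_{t}^{\Phi,-1}(\pa_{s}^{\Phi,-1}\circ\theta_{t})$ and the conjugated cocycle
\[
\mathscr{V}_{t+s}^{\Phi}=\pa_{t}^{\Phi,-1}\big(\mathscr{V}_{s}^{\Phi}\circ\theta_{t}\big)\,\pa_{t}^{\Phi}\,\mathscr{V}_{t}^{\Phi}
\]
for the path ordered exponential (these identities, and the a priori bounds used throughout, being the content of the appendix); multiplying them yields the cocycle for $\mathscr{V}_{t}^{\Phi}\pa_{t}^{\Phi,-1}$. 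For the generator I would split $\mathbb{E}^{x}[\,\cdot\,]$ according to $N(h)\in\{0\},\{1\},\{\ge2\}$: on $\{N(h)=0\}$ one has $\mathbb{X}_{h}=x$, $\pa_{h}^{\Phi}=\mathbf{1}$ and $\mathscr{V}_{h}^{\Phi}=\mathrm{e}^{-hV(x)}$, contributing $\mathrm{e}^{-h\,\mathrm{deg}_{m,b}(x)}\mathrm{e}^{-hV(x)}f(x)$; the event that the first jump occurs before $h$ and lands at a neighbour $y\in U$ has probability $h\,b(x,y)/m(x)+O(h^{2})$ and contributes $-h\,b(x,y)m(x)^{-1}\Phi_{y,x}f(y)+o(h)$ (a first jump out of $U$ kills the indicator), while $\{N(h)\ge2\}$ is $O(h^{2})$; dividing by $h$, letting $h\downarrow0$ and using $\mathrm{deg}_{m,b}(x)=m(x)^{-1}\sum_{y}b(x,y)$ gives $h^{-1}(f-P^{U}_{h}f)(x)\to H^{U}_{\Phi,V}f(x)$, as required.

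\emph{Stages 2 and 3: passage to locally finite graphs and to general $V$.} Choose finite connected sets $X_{n}\uparrow X$; by monotone convergence of the Dirichlet forms $H_{\Phi,V}^{X_{n}}\to H_{\Phi,V}$ in the strong resolvent sense, so $\mathrm{e}^{-tH_{\Phi,V}^{X_{n}}}\to\mathrm{e}^{-tH_{\Phi,V}}$ strongly, while on $\{\tau<\infty\}$ the jump chain eventually leaves every finite set (each vertex being visited only finitely often, its holding times being i.i.d.\ exponential), so $\tau_{X_{n}}\uparrow\tau$ $\mathbb{P}$-a.s. Since the Stage 1 integrand is dominated by $\mathrm{e}^{t\|V\|_{\infty}}|f(\mathbb{X}_{t})|_{\mathbb{X}_{t}}$, which is $\mathbb{P}^{x}$-integrable on $\{t<\tau\}$ by \eqref{prob}, dominated convergence yields \eqref{fds} for bounded $V$. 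For general $V$ I would approximate monotonically in two steps --- first truncating $V^{-}$ (decreasingly) to reach bounded $V^{+}$ and form-bounded $V^{-}$, then truncating $V^{+}$ (increasingly) --- applying Kato's monotone convergence theorem for quadratic forms (using the uniform form-boundedness of the truncations of $V^{-}$) to get strong resolvent convergence of the operators, and on the probabilistic side dominated convergence once more: the solutions of \eqref{int} converge $\mathbb{P}^{x}$-a.e.\ on $\{t<\tau\}$ (e.g.\ term by term in \eqref{poi} via Proposition \ref{poe}), dominated by $\exp\int_{0}^{t}|V^{-}(\mathbb{X}_{s})|_{\mathbb{X}_{s}}\Id s$ (the sharpened bound on $\mathscr{V}^{\Phi}$ obtained from $V^{+}\ge0$, recorded in the appendix), and the form-boundedness of $Q_{|V^{-}|}$ furnishes, via a Khasminskii-type argument, the finiteness of $\mathbb{E}^{x}[1_{\{t<\tau\}}\exp(\int_{0}^{t}|V^{-}(\mathbb{X}_{s})|_{\mathbb{X}_{s}}\Id s)\,|f(\mathbb{X}_{t})|_{\mathbb{X}_{t}}]$ that makes the right-hand side of \eqref{fds} well defined and provides the dominating function.

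\emph{The main obstacle.} The genuinely hard part is Stage 1, specifically the conjugated cocycle for $\mathscr{V}^{\Phi}$ and the resulting product identity for $\mathscr{V}_{t}^{\Phi}\pa_{t}^{\Phi,-1}$ (this is ``claim (i)'' of Section \ref{SS:pf-fki-finite}). Because $\pa^{\Phi}$ and $\mathscr{V}^{\Phi}$ no longer commute, and $\mathscr{V}^{\Phi}$ genuinely depends on $\Phi$ --- unlike the scalar situation of Remark \ref{sdf}, where $\mathscr{V}^{\theta}$ is $\theta$-independent and everything commutes, rendering the semigroup identity trivial --- one must carefully match the ordering of the path ordered exponential against the jump structure of $\mathbb{X}$ and conjugate the restarted process back into the fibre $F_{\mathbb{X}_{0}}$. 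A secondary technical point is the interplay, used in Stage 3, between the analytic form-boundedness hypothesis and the probabilistic integrability of $\exp\int_{0}^{t}|V^{-}(\mathbb{X}_{s})|_{\mathbb{X}_{s}}\Id s$.
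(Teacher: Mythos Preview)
Your architecture matches the paper's: prove \eqref{fds} on finite $U\subset X$ by showing the right-hand side is a $C_0$-semigroup with generator $-H^{(U)}_{\Phi,V}$, exhaust $X$ by finite sets (monotone form convergence on the analytic side, dominated convergence on the probabilistic side), then drop the boundedness of $V$ by monotone approximation. Two points need repair.

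\textbf{The cocycle for $\mathscr{V}^{\Phi}$.} Because in \eqref{int} the factor $\mathscr{V}^{\Phi}_{s}$ sits to the \emph{left} of $V^{\Phi}_{s}$, the identity your own method (uniqueness for the ODE) actually produces is
\[
\mathscr{V}^{\Phi}_{t+s}=\mathscr{V}^{\Phi}_{t}\,\pa^{\Phi,-1}_{t}\big(\mathscr{V}^{\Phi}_{s}\circ\theta_{t}\big)\pa^{\Phi}_{t},
\]
not the version you wrote; only this ordering combines with $\pa^{\Phi,-1}_{t+s}=\pa^{\Phi,-1}_{t}(\pa^{\Phi,-1}_{s}\circ\theta_{t})$ to give $\mathscr{V}^{\Phi}_{t+s}\pa^{\Phi,-1}_{t+s}=\mathscr{V}^{\Phi}_{t}\pa^{\Phi,-1}_{t}\cdot\big[(\mathscr{V}^{\Phi}_{s}\pa^{\Phi,-1}_{s})\circ\theta_{t}\big]$, which is what the Markov step requires. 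If you differentiate your formula in $s$ you will see it fails \eqref{int} at time $t+s$ --- exactly the noncommutativity trap you flagged.

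\textbf{The majorant in Stage 3.} A Khas'minskii-type argument needs a Kato smallness condition on $|V^{-}|$, whereas Theorem \ref{FK} only assumes $Q$-form-boundedness of $Q_{|V^{-}|}$ with bound $<1$; the latter is strictly weaker and does \emph{not} in general yield $\sup_{x}\mathbb{E}^{x}[\mathrm{e}^{\int_{0}^{t}|V^{-}|}]<\infty$. The paper closes this differently: it invokes the already available \emph{scalar} Feynman--Kac formula of \cite{GKS-13} to identify the dominating expectation as $\mathrm{e}^{-tH_{0,w}}|f|(x)$ with $w=\min\mathrm{spec}(V^{+}(\bullet))-\max\mathrm{spec}(V^{-}(\bullet))$; since $w^{-}\le |V^{-}|$, the hypothesis makes $H_{0,w}$ well defined, and the semigroup evaluated on $|f|\in\ell^{2}_{m}$ is finite pointwise. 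Your direct diamagnetic proof of the auxiliary form-boundedness statement, by contrast, is a genuine simplification over the paper, which deduces it only \emph{a posteriori} from the semigroup domination $|\mathrm{e}^{-tH_{\Phi,0}}f|\le\mathrm{e}^{-tH}|f|$ obtained via the $V=0$ case of \eqref{fds}; just replace ``by closability'' with the lower semicontinuity of the closed form $Q$ when passing from $\Gamma_{c}$ to $\mathsf{D}(Q_{\Phi,0})$.
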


From taking Laplace transforms, we directly get:

\begin{Corollary}\label{cor1} In the situation of Theorem \ref{FK}, let $\lambda\in\mathbb{C}$ with $\mathrm{Re}(\lambda)> \min\mathrm{spec}(H_{\Phi,V})$, and let $k\in\IN$. Then one has
\begin{align*}
(H_{\Phi,V}+\lambda)^{-k} f(x)=\f{1}{(k-1)!}\int^{\infty}_0 t^{k-1}\mathrm{e}^{-t\lambda}\mathbb{E}^x\left[1_{\{t<\tau\}}\mathscr{V}_{t}^{\Phi}//_{t}^{\Phi,-1}f(\mathbb{X}_{t})\right]\Id t.
\end{align*}
\end{Corollary}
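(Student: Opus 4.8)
The plan is to derive Corollary \ref{cor1} from Theorem \ref{FK} by the standard resolvent-as-Laplace-transform identity applied to a self-adjoint semibounded operator. Since $H_{\Phi,V}$ is self-adjoint and semibounded from below, with $E_0:=\min\mathrm{spec}(H_{\Phi,V})$, the spectral theorem gives, for any $\lambda\in\IC$ with $\mathrm{Re}(\lambda)>-E_0$ (equivalently $\mathrm{Re}(\lambda)>\min\mathrm{spec}(H_{\Phi,V})$ in the sign convention used above, after reading off that $H_{\Phi,V}+\lambda$ is boundedly invertible precisely when $\mathrm{Re}(\lambda)$ exceeds $-E_0$) and any $k\in\IN$, the operator identity
\begin{align*}
(H_{\Phi,V}+\lambda)^{-k}=\f{1}{(k-1)!}\int_0^{\infty} t^{k-1}\mathrm{e}^{-t\lambda}\,\mathrm{e}^{-tH_{\Phi,V}}\,\Id t,
\end{align*}
where the integral converges in the strong (indeed operator-norm) sense because $\|\mathrm{e}^{-tH_{\Phi,V}}\|\leq \mathrm{e}^{-tE_0}$ and $t^{k-1}\mathrm{e}^{-t(\lambda+E_0)}$ is integrable on $[0,\infty)$ once $\mathrm{Re}(\lambda)>-E_0$. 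This is a routine consequence of the functional calculus: one checks the scalar identity $(\mu+\lambda)^{-k}=\frac{1}{(k-1)!}\int_0^\infty t^{k-1}\mathrm{e}^{-t\lambda}\mathrm{e}^{-t\mu}\,\Id t$ for $\mu\geq E_0$ and integrates against the spectral measure, using dominated convergence to interchange the spectral integral with the $t$-integral.

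Next I would evaluate both sides at a fixed $f\in\Gamma_{\ell^2_m}(X,F)$ and a fixed vertex $x\in X$, i.e. apply the bounded functional of pointwise evaluation at $x$ (which is continuous on $\Gamma_{\ell^2_m}(X,F)$ since $X$ is discrete and $m(x)>0$). Evaluation at $x$ commutes with the strong integral, so
\begin{align*}
(H_{\Phi,V}+\lambda)^{-k}f(x)=\f{1}{(k-1)!}\int_0^{\infty} t^{k-1}\mathrm{e}^{-t\lambda}\big(\mathrm{e}^{-tH_{\Phi,V}}f\big)(x)\,\Id t.
\end{align*}
Finally I substitute the Feynman-Kac representation \eqref{fds} from Theorem \ref{FK} for $\big(\mathrm{e}^{-tH_{\Phi,V}}f\big)(x)=\mathbb{E}^x\big[1_{\{t<\tau\}}\mathscr{V}_t^{\Phi}//_t^{\Phi,-1}f(\mathbb{X}_t)\big]$ into the integrand, which yields exactly the claimed formula.

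The only genuine point requiring care — and hence the main (minor) obstacle — is justifying the interchange of the deterministic $t$-integral with the expectation $\mathbb{E}^x[\cdot]$ that is implicit in writing the right-hand side as stated; but this is immediate from Fubini-Tonelli together with the a priori bound $|\mathscr{V}_t^{\Phi}(\omega)|_x\leq \mathrm{e}^{\int_0^t|V_s^\Phi(\omega)|_x\,\Id s}$ from Proposition \ref{poe} and the unitarity of $//_t^{\Phi,-1}$, once one notes that the already-established norm estimate $\|\mathrm{e}^{-tH_{\Phi,V}}\|_{m}\leq\mathrm{e}^{-tE_0}$ controls the integrability in $t$ after taking absolute values and expectations; concretely, $\int_0^\infty t^{k-1}\mathrm{e}^{-\mathrm{Re}(\lambda)t}\,\mathbb{E}^x\big[1_{\{t<\tau\}}|\mathscr{V}_t^\Phi//_t^{\Phi,-1}f(\mathbb{X}_t)|_x\big]\,\Id t<\infty$ because the inner expectation is at most $(\mathrm{e}^{-tH})|f|_{\bullet}(x)\cdot C$-type quantities bounded by $\mathrm{e}^{-tE_0}$-growth (using $V^-$ boundedness), making the whole construction absolutely integrable for $\mathrm{Re}(\lambda)>-E_0$. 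With this integrability in hand, Fubini applies and the proof is complete.
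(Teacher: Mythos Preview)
Your proposal is correct and follows exactly the paper's approach: the paper simply states ``From taking Laplace transforms, we directly get'' before the corollary, and your argument spells out precisely that Laplace-transform computation via the spectral theorem and then substitutes the Feynman--Kac formula from Theorem~\ref{FK}. One small remark: your parenthetical claiming that $\mathrm{Re}(\lambda)>-E_0$ is ``equivalently $\mathrm{Re}(\lambda)>\min\mathrm{spec}(H_{\Phi,V})$'' is not literally an equivalence (the correct convergence condition for the Laplace integral is $\mathrm{Re}(\lambda)>-E_0$; the hypothesis as written in the corollary appears to carry a sign slip), but this concerns the statement rather than your method, which is sound.
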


The proof of Theorem \ref{FK} will be given in Section \ref{beweis}. Here, we just add:

\begin{Remark} 1. The fact that for $V$ as in Theorem \ref{FK}, the form $Q_{V^{-}}$ is $Q_{\Phi,0}$-bounded with bound $<1$ will itself be derived
 from a path integral argument, and thus also fits philosophically into the statement (see also \cite{G1}).

2. Note that by expanding $\mathscr{V}_{t}^{\Phi}$ into a path ordered exponential according to Proposition \ref{poe}, formula (\ref{fds}) induces an intuitive perturbation formula in the obvious way which precisely explains how the interaction of the \lq\lq{}free system\rq\rq{} $H_{\Phi,0}$ with the perturbation $V$ takes place at a fixed time.

3. In the scalar situation (that is, if $F_x=\{x\}\times \IC$), Remark \ref{sdf} implies that the above Feynman-Kac formula reduces to the Feynman-Kac-It\^o formula from Theorem 4.1 in \cite{GKS-13}: If $\theta$ is a magnetic potential and if $v:X\to\IR$ is a suitable electric potential, then for all $f\in\ell^2_m(X)$ one has
$$
\mathrm{e}^{-tH_{\theta,v}}f(x)=\mathbb{E}^x\left[1_{\{t<\tau\}}\mathrm{e}^{-\int^t_0v(\mathbb{X}_{s})\Id s}\mathrm{e}^{\mathrm{i}\int^t_0\theta(\Id \mathbb{X}_s )}f(\mathbb{X}_{t})\right],
$$
so that Theorem \ref{FK} can be considered as a natural generalization of Theorem 4.1 from \cite{GKS-13}. In fact, we shall even use the Feynman-Kac-It\^o formula with $\theta=0$ in the proof of Theorem \ref{FK} through a dominated convergence argument.

4. Comparable vector-bundle-valued path integral formulae for operators acting on Riemannian manifolds can be found in \cite{G-10, G2}.
\end{Remark}

\subsection{Applications of the Feynman-Kac formula}\label{SS:applications}

\subsubsection{Semigroup domination and $\mathsf{L}^q$-estimates.} We start with the following observation on semigroup domination, which is the generalization of Theorem 5.2 in \cite{GKS-13} to our vector bundle setting, and is particularly important, for it allows us to estimate vector-data by certain scalar data:

\begin{Theorem}\label{kato} Under the assumptions of Theorem \ref{FK}, let $w:X\to \IR$ satisfy the following conditions\emph{:}
\begin{enumerate}
\item[$\bullet$] There exist $w^{\pm}:X\to [0,\infty)$ with $w=w^+-w^-$ such that $Q_{w^-}$ is $Q$-bounded with bound $<1$.
\item[$\bullet$] One has $V\geq w$.
\end{enumerate}
Then the following assertions hold\emph{:}
\begin{enumerate}
\item[\emph{(i)}] For all $t\geq 0$, $f\in\Gamma_{\ell^2_m}(X,F)$, $x\in X$, we have
\[
|\mathrm{e}^{-tH_{\Phi,V}}f(x)|_{x}\leq \mathrm{e}^{-tH_{0,w}}|f|(x).
\]

\item[\emph{(ii)}] For all $f\in\mathsf{D}(Q_{\Phi, V})$, we have $|f|\in\mathsf{D}(Q_{0,w})$ and
\[
Q_{\Phi, V}(f,f)\geq  Q_{0,w}(|f|,|f|).
\]

\item[\emph{(iii)}] One has $\min\mathrm{spec}(H_{\Phi,V})\geq \min\mathrm{spec}(H_{0,w})$.

\item [\emph{(iv)}] For all $f\in\Gamma_{\ell^2_m}(X,F)$, $\lambda\in\mathbb{C}$ with $\mathrm{Re}(\lambda)> \min\mathrm{spec}(H_{\Phi,V})$, $k\in\IN$, and all $x\in X$ we have
\[
|(H_{\Phi,V}+\lambda)^{-k}f(x)|_{x}\leq (H_{0,w}+\lambda)^{-k}|f|(x).
\]
\end{enumerate}
\end{Theorem}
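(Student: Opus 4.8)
The key is that all four assertions follow from the Feynman-Kac representation (\ref{fds}) together with the scalar Feynman-Kac-It\^o formula from \cite{GKS-13} applied to $H_{0,w}$ (Remark \ref{sdf} with $\theta = 0$). First I would establish the pointwise inequality (i). Starting from (\ref{fds}), I take fiberwise norms and use that $//_{t}^{\Phi,-1}(\omega)$ is unitary, so $|//_{t}^{\Phi,-1}f(\mathbb{X}_t)|_{x} = |f(\mathbb{X}_t)|_{\mathbb{X}_t}$. The crucial estimate is on the path-ordered exponential: since $V^{\Phi}_s(\omega) = -//^{\Phi,-1}_s V(\mathbb{X}_s)//^{\Phi}_s$ is conjugate to $-V(\mathbb{X}_s)$ by a unitary, the hypothesis $V \geq w$ gives an operator bound translating into $|\mathscr{V}^{\Phi}_t(\omega) g|_{x} \leq \mathrm{e}^{-\int_0^t w(\mathbb{X}_s)\,\Id s}|g|_{x}$ for any $g \in F_x$; this should be a consequence of the Gronwall/comparison estimates for path-ordered exponentials collected in the appendix (the ``useful identities and inequalities on path ordered exponentials''). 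Combining, the integrand in (\ref{fds}) is dominated in norm by $1_{\{t<\tau\}}\mathrm{e}^{-\int_0^t w(\mathbb{X}_s)\,\Id s}|f(\mathbb{X}_t)|_{\mathbb{X}_t}$, and taking $\mathbb{E}^x$ and applying the scalar Feynman-Kac-It\^o formula to $\mathrm{e}^{-tH_{0,w}}$ yields (i). I should note at the outset that the conditions on $w$ ensure, via Theorem \ref{FK} itself (applied in the scalar bundle), that $Q_{0,w}$ and hence $H_{0,w}$ and the semigroup $\mathrm{e}^{-tH_{0,w}}$ are well-defined, and that $Q_{V^-}$ is $Q_{\Phi,0}$-bounded with bound $<1$ (so $H_{\Phi,V}$ makes sense) — this last point is exactly the parenthetical assertion that is proved by a path-integral argument in the remark following Theorem \ref{FK}, and I can also deduce the corresponding statement for $w$ since $w^- \geq V^- - $ (bounded), more carefully $V \geq w$ forces $V^- \leq w^-$ up to the positive part bookkeeping, so $Q$-boundedness of $Q_{w^-}$ with bound $<1$ passes to $Q_{V^-}$.

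For (ii), I would use the standard form-theoretic consequence of semigroup domination: given (i), for $f \in \mathsf{D}(Q_{\Phi,V})$ one estimates
\[
\frac{1}{t}\langle f - \mathrm{e}^{-tH_{\Phi,V}}f, f\rangle_m \geq \frac{1}{t}\Big(\||f|\|_m^2 - \langle \mathrm{e}^{-tH_{0,w}}|f|, |f|\rangle_m\Big),
\]
where the inequality on the right uses (i) pointwise together with Cauchy-Schwarz fiberwise, the fact that $\langle \mathrm{e}^{-tH_{\Phi,V}}f,f\rangle_m \leq \langle \mathrm{e}^{-tH_{0,w}}|f|,|f|\rangle_m$ after one more pointwise-domination step, and positivity-preservation of $\mathrm{e}^{-tH_{0,w}}$. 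Letting $t\downarrow 0$, the left side converges to $Q_{\Phi,V}(f,f)$ since $f\in\mathsf{D}(Q_{\Phi,V})$; the right side is bounded, which by the standard criterion (monotone increase of $t^{-1}\langle g-\mathrm{e}^{-tA}g,g\rangle$ as $t\downarrow 0$) forces $|f| \in \mathsf{D}(Q_{0,w})$ with $Q_{0,w}(|f|,|f|) \leq \liminf$, giving the claimed inequality. Then (iii) is immediate from (ii) via the variational (min-max) characterization of $\min\mathrm{spec}$: for any $0 \neq f \in \mathsf{D}(Q_{\Phi,V})$,
\[
\frac{Q_{\Phi,V}(f,f)}{\|f\|_m^2} \geq \frac{Q_{0,w}(|f|,|f|)}{\||f|\|_m^2} \geq \min\mathrm{spec}(H_{0,w}),
\]
and taking the infimum over $f$ gives $\min\mathrm{spec}(H_{\Phi,V}) \geq \min\mathrm{spec}(H_{0,w})$.

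Finally (iv) follows from (i) by the same Laplace transform device used to pass from Theorem \ref{FK} to Corollary \ref{cor1}: for $\mathrm{Re}(\lambda) > \min\mathrm{spec}(H_{\Phi,V})$ (which by (iii) is $\geq \min\mathrm{spec}(H_{0,w})$, so the scalar resolvent integral also converges), write $(H_{\Phi,V}+\lambda)^{-k}f(x) = \frac{1}{(k-1)!}\int_0^\infty t^{k-1}\mathrm{e}^{-t\lambda}\mathrm{e}^{-tH_{\Phi,V}}f(x)\,\Id t$, take fiberwise norms, pull the norm inside the integral, apply (i) to the integrand, and recognize the result as $(H_{0,w}+\lambda)^{-k}|f|(x)$ — here I use that $\mathrm{e}^{-t\lambda}$ has modulus $\mathrm{e}^{-t\mathrm{Re}\lambda}$ and that $\mathrm{e}^{-tH_{0,w}}$ is positivity preserving so the scalar resolvent of $|f|$ is genuinely the integral of the positive integrand. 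The main obstacle is the first step: the operator-norm domination $|\mathscr{V}^{\Phi}_t(\omega)g|_x \leq \mathrm{e}^{-\int_0^t w(\mathbb{X}_s)\,\Id s}|g|_x$, i.e. transferring the pointwise bundle inequality $V \geq w$ through the non-commuting path-ordered exponential. This is exactly where the vector-valued setting genuinely differs from the scalar case of \cite{GKS-13}, and it must be handled by the Gronwall-type comparison estimate for solutions of (\ref{int}) — differentiating $|\mathscr{V}^{\Phi}_t(\omega)g|_x^2$ and using $(V^{\Phi}_s g', g')_x = -(V(\mathbb{X}_s)//^{\Phi}_s g', //^{\Phi}_s g')_{\mathbb{X}_s} \leq -w(\mathbb{X}_s)|g'|_x^2$ — presumably supplied by the appendix on path-ordered exponentials.
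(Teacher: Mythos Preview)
Your proposal is correct and follows essentially the same route as the paper: part (i) via Theorem \ref{FK}, the operator-norm bound on $\mathscr{V}^{\Phi}_t$ from Proposition \ref{app2} (i), and the scalar Feynman--Kac formula (\ref{scall}); part (iv) via the Laplace transform exactly as in Corollary \ref{cor1}; and parts (ii)--(iii) via the standard functional-analytic passage from semigroup domination to form domination, which the paper does not spell out but simply cites from \cite{G2}. One minor point: your side remark about deducing $Q$-boundedness of $Q_{V^-}$ from that of $Q_{w^-}$ is unnecessary (the hypotheses of Theorem \ref{FK} are already assumed, so $H_{\Phi,V}$ is well-defined from the outset) and the claim ``$V\geq w$ forces $V^-\leq w^-$'' is not literally true for arbitrary decompositions, so it is best dropped.
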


\begin{proof} Noting that $\mathrm{e}^{-tH_{0,w}}$ is given by
\begin{align}
\mathrm{e}^{-tH_{0,w}}u(x)=\mathbb{E}^x\left[1_{\{t<\tau\}}\mathrm{e}^{-\int^t_0 w(\mathbb{X}_s)\Id s}u(\mathbb{X}_t)\right],\>u \in \ell^2_m(X),\label{scall}
\end{align}
and being equipped with Proposition \ref{app2} (i), the assertion (i) follows directly from Theorem \ref{FK}. In view of Corollary \ref{cor1}, (i) directly implies (iv).
The other assertions are implied by (i) through purely functional analytic arguments (see for example the proof of Theorem 2.13 in \cite{G2} for these arguments).
\end{proof}

Let us explain how the latter result can be used to derive $\mathsf{L}^p$-estimates for the underlying Schr\"odinger semigroups, with some additional control on $V^-$ and $p(t,x,y)$. To this end, we recall:

\begin{Definition} A function $w:X\to \IC$ is said to be in the \emph{Kato class} $\mathcal{K}(Q)$ corresponding to $Q$, if
$$
\lim_{t\to 0+}\sup_{x\in X}\int^t_0\sum_{y\in X} p(s,x,y) |w(y)|m(y)\Id s=0.
$$
\end{Definition}

Note that for any $w:X\to \IC$, $s\geq 0$ one has
$$
\sum_{y\in X} p(s,x,y) |w(y)| m(y)= \mathbb{E}^x\left[  1_{\{s<\tau\}}\left|w(\mathbb{X}_s)\right|
  \right],
$$
which follows from (\ref{prob}) and which produces an equivalent probabilistic characterization of the Kato class. Furthermore, one obviously has $\ell^{\infty}(X)\subset \mathcal{K}(Q)$, and in practice (where one usually has some bounds of the form $p(t,x,y)\leq f(t)$ for small $t$) one can produce large $\ell^{p}_m(X)$-type subspaces of the Kato class. We refer the reader to \cite{G1,kt} for arguments of this type.

The importance of $\mathcal{K}(Q)$ lies in the following observation:

\begin{Proposition}\label{kat} The following statements hold for any $w\in \mathcal{K}(Q)$\emph{:}

\emph{(i)} The form $Q_{w}$ is $Q$-bounded with bound $<1$.

\emph{(ii)} For any $\delta>1$ there is a $C(w,\delta)>0$ such that for all $t\geq 0$,
\begin{align}
\sup_{x\in X} \mathbb{E}^x\left[1_{\{t<\tau\}}\mathrm{e}^{\int^t_0 \left|w(\mathbb{X}_s)\right|\Id
s}\right] \leq \delta \mathrm{e}^{tC(w,\delta)}. \nn
\end{align}
\end{Proposition}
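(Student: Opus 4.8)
The plan is to establish (ii) first, by a Khasminskii-type argument, and then to deduce (i) from it. Throughout I may assume $w\geq 0$, since $|w|\in\mathcal{K}(Q)$ and $|Q_w(f,f)|\leq Q_{|w|}(f,f)$, so the form bound for $|w|$ yields the one for $w$.

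For (ii), I would set $\alpha(t)\deff\sup_{x\in X}\int_0^t\sum_{y\in X}p(s,x,y)w(y)m(y)\,\Id s$, which by definition of $\mathcal{K}(Q)$ is nondecreasing and tends to $0$ as $t\to 0+$. On $\{t<\tau\}$ one has $N(t)<\infty$ by (\ref{bez}), hence $\int_0^t w(\mathbb{X}_s)\,\Id s<\infty$ there; expanding the exponential and using Tonelli together with $\frac{1}{n!}\big(\int_0^t g\big)^n=\int_{t\sigma_n}g(s_1)\cdots g(s_n)\,\Id s_1\,\cdots\,\Id s_n$ for $g\geq 0$ turns the left-hand side of (ii) into $\sum_{n\geq 0}\mathbb{E}^x[1_{\{t<\tau\}}\int_{t\sigma_n}w(\mathbb{X}_{s_1})\cdots w(\mathbb{X}_{s_n})\,\Id s_1\,\cdots\,\Id s_n]$. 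Since $s_n\leq t$ on $t\sigma_n$, so that $1_{\{t<\tau\}}\leq 1_{\{s_n<\tau\}}$, the $n$-th term is dominated by $J_n(x)\deff\int_{t\sigma_n}\mathbb{E}^x[1_{\{s_n<\tau\}}w(\mathbb{X}_{s_1})\cdots w(\mathbb{X}_{s_n})]\,\Id s_1\,\cdots\,\Id s_n$, and the crucial step is the bound $J_n(x)\leq\alpha(t)^n$, which I prove by induction on $n$: integrating the innermost variable $s_n$ over $[s_{n-1},t]$ and conditioning on $\IFF_{s_{n-1}}$, the Markov property of $\IMM$ gives on $\{s_{n-1}<\tau\}$
$$\int_{s_{n-1}}^t\mathbb{E}^x\!\left[1_{\{s_n<\tau\}}w(\mathbb{X}_{s_n})\,\big|\,\IFF_{s_{n-1}}\right]\Id s_n=\int_0^{t-s_{n-1}}\mathbb{E}^{\mathbb{X}_{s_{n-1}}}\!\left[1_{\{u<\tau\}}w(\mathbb{X}_u)\right]\Id u\leq\alpha(t),$$
whence $J_n(x)\leq\alpha(t)\,J_{n-1}(x)$ with $J_0\equiv 1$. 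Summing the geometric series shows that, as soon as $\alpha(t)<1$, $\sup_x\mathbb{E}^x[1_{\{t<\tau\}}\mathrm{e}^{\int_0^t w(\mathbb{X}_s)\Id s}]\leq(1-\alpha(t))^{-1}$. Given $\delta>1$ I would then pick $t_0=t_0(w,\delta)>0$ with $\alpha(t_0)\leq 1-\delta^{-1}$, so that $F(t)\leq\delta$ for $t\in[0,t_0]$, where $F(t)\deff\sup_x\mathbb{E}^x[1_{\{t<\tau\}}\mathrm{e}^{\int_0^t w(\mathbb{X}_s)\Id s}]$; splitting $\int_0^{s+t}=\int_0^s+\int_s^{s+t}$ and conditioning on $\IFF_s$ gives the submultiplicativity $F(s+t)\leq F(s)F(t)$, and writing $t=nt_0+r$ with $n\in\IN_0$, $0\leq r<t_0$, yields $F(t)\leq F(t_0)^nF(r)\leq\delta^{n+1}\leq\delta\,\mathrm{e}^{tC(w,\delta)}$ with $C(w,\delta)\deff t_0^{-1}\ln\delta$, which is (ii).

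For (i), fix $N\in\IN$; since $Nw\in\mathcal{K}(Q)$, part (ii) shows that
$$P_t u(x)\deff\mathbb{E}^x\left[1_{\{t<\tau\}}\mathrm{e}^{N\int_0^t w(\mathbb{X}_s)\Id s}u(\mathbb{X}_t)\right]$$
defines a bounded operator on $\ell^{\infty}(X)$ with $\|P_t\|_\infty\leq\delta\,\mathrm{e}^{tC_N}$. I would check that $(P_t)_{t\geq 0}$ is a semigroup (Markov property), that each $P_t$ is symmetric on $\ell^2_m(X)$ with symmetric kernel (reversibility of $\IMM$), and — using $\|P_t\|_{m;1}=\|P_t\|_\infty$ and interpolation — that $\|P_t\|_{m}\leq\delta\,\mathrm{e}^{tC_N}$; it then gives rise to a positivity preserving self-adjoint operator $H_N$ with $\mathrm{e}^{-tH_N}=P_t$ and $\min\mathrm{spec}(H_N)\geq -C_N$ (this passes through the pseudo-resolvent $\int_0^\infty\mathrm{e}^{-\mu t}P_t\,\Id t$, avoiding any fuss about strong continuity). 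The probabilistic Duhamel identity $P_t=\mathrm{e}^{-tH}+N\int_0^t\mathrm{e}^{-sH}M_wP_{t-s}\,\Id s$ (obtained by expanding $\mathrm{e}^{N\int_0^t w}=1+N\int_0^t w(\mathbb{X}_r)\mathrm{e}^{N\int_r^t w}\,\Id r$ and using the Markov property at $r$), together with its Laplace transform, then identifies the quadratic form of $H_N$ on the core $\mathsf{C}_c(X)$ as $f\mapsto Q(f,f)-N\,Q_w(f,f)$; semiboundedness of $H_N$ therefore gives $N\,Q_w(f,f)\leq Q(f,f)+C_N\|f\|_m^2$ for $f\in\mathsf{C}_c(X)$. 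Approximating $f\in\mathsf{D}(Q)$ in the $Q$-form norm by finitely supported sections (a form core for $Q$) and applying Fatou's lemma to $Q_w(\cdot,\cdot)=\langle w\,\cdot,\cdot\rangle_m$ extends this to all of $\mathsf{D}(Q)$, and with $N=2$ it reads $Q_w(f,f)\leq\tfrac{1}{2}Q(f,f)+\tfrac{1}{2}C_2\|f\|_m^2$, so $Q_w$ is $Q$-bounded with bound $\tfrac{1}{2}<1$ (indeed, with bound $0$ upon letting $N\to\infty$), which is (i).

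The hardest step will be, in (ii), the inductive estimate $J_n\leq\alpha(t)^n$: one must integrate the $s_j$ in increasing order, attach the survival indicator to the \emph{largest} time so that it is controlled by $1_{\{t<\tau\}}$, and apply the Markov property of $\IMM$ "from the right" while keeping track of the lifetime $\tau$ correctly. In (i) the genuinely delicate point — relevant precisely because $w$ may be unbounded on a locally infinite graph, so that the naive estimate $\||w|^{1/2}(H+\mu)^{-1}|w|^{1/2}\|<1$ for large $\mu$ is not directly accessible — is the passage from the probabilistic exponential bound of (ii) to the identification of the generator of $(P_t)$ with the form sum of $H$ and $-Nw$; once that is in hand, the remaining steps (interpolation, the pseudo-resolvent construction, Fatou, and the rescaling in $N$) are soft.
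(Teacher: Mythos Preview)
Your argument is correct. For (ii), the Khasminskii-type induction you carry out is precisely the content of the reference the paper cites (Lemma~3.8 of \cite{pall}), so there the approaches coincide.

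For (i), however, you and the paper part ways. The paper simply invokes Theorem~3.1 of Stollmann--Voigt \cite{peter}, an abstract perturbation result for regular Dirichlet forms which reads off the (in fact infinitesimal) form bound directly from the Kato condition, without passing through (ii) at all. Your argument instead \emph{deduces} (i) from (ii): you build the Feynman--Kac semigroup $(P_t)$ with potential $-Nw$, use (ii) to bound it on $\ell^{\infty}$ and then (by symmetry of the kernel and interpolation) on $\ell^2_m$, identify its quadratic form on $\mathsf{C}_c(X)$ via Duhamel, and finally read off the form bound from semiboundedness of the generator. This route is more self-contained and makes the probabilistic mechanism behind the form bound transparent, at the cost of being considerably longer and of requiring the identification of the generator---which you rightly flag as the delicate step. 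One small comment: your pseudo-resolvent detour is unnecessary here, since in the discrete setting strong continuity of $(P_t)$ on $\ell^2_m$ is easy (the first jump time is a.s.\ positive and $\|P_t\|_m$ is locally bounded), so the standard $C_0$-semigroup machinery applies directly and gives the self-adjoint generator $H_N$ without further ado.
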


\begin{proof} (i) This follows from an abstract result on regular Dirichlet forms, \cite[Theorem~3.1]{peter}.

(ii) This assertion follows from the proof of Lemma 3.8 from \cite{pall}, which only uses the Markov property of Brownian motion and thus applies directly in our situation.
\end{proof}

Proposition \ref{kat} can be combined with Theorem \ref{kato} (i) to give the following result:

\begin{Theorem}\label{lp} Assume that $V$ admits a decomposition $V=V^{+}-V^{-}$ into potentials $V^{\pm}\geq 0$ such that $|V^-|\in \mathcal{K}(Q)$. Then for all $\delta>1$ there is a $C(V^-,\delta)>0$ such that for all $q\in [2,\infty]$ and all $t\geq 0$ with
\begin{align}
C(t):=\sup_{x,y\in X}p(t,x,y)<\infty\label{boun}
\end{align}
one has
$$
\left\|\mathrm{e}^{-tH_{\Phi,V}}\right\|_{m;2,q}\leq \delta \mathrm{e}^{tC(|V^-|,\delta)}C(t)^{\f{1}{2}-\f{1}{q}}.
$$
\end{Theorem}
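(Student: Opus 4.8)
The plan is to use the semigroup domination of Theorem~\ref{kato} to reduce the assertion to a purely scalar $\ell^2_m$-to-$\ell^q_m$ bound for $\mathrm{e}^{-tH_{0,-|V^-|}}$, and then to obtain the latter by interpolating the two endpoint exponents $q=2$ and $q=\infty$, using Proposition~\ref{kat}~(ii) to control the Feynman--Kac exponential.

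First I would check that, with $w:=-|V^-|$, $w^{+}:=0$, $w^{-}:=|V^-|$, the hypotheses of Theorem~\ref{kato} (hence also those of Theorem~\ref{FK}) are in force: $Q_{w^-}=Q_{|V^-|}$ is $Q$-bounded with bound $<1$ by Proposition~\ref{kat}~(i), and $V\geq w$ holds pointwise because, for the self-adjoint operators $V^{\pm}(x)\geq 0$, one has $(V(x)f,f)_x\geq-(V^{-}(x)f,f)_x\geq-|V^{-}(x)|_x|f|^2_x$. Theorem~\ref{kato}~(i) then gives $|\mathrm{e}^{-tH_{\Phi,V}}f(x)|_x\leq\mathrm{e}^{-tH_{0,-|V^-|}}|f|(x)$ for all $f\in\Gamma_{\ell^2_m}(X,F)$, $t\geq 0$, $x\in X$; since $\mathrm{e}^{-tH_{0,-|V^-|}}$ is positivity preserving (cf.\ (\ref{scall})) and $\||f|\|_{m;2}=\|f\|_{m;2}$, taking $\ell^q_m$-norms yields $\|\mathrm{e}^{-tH_{\Phi,V}}\|_{m;2,q}\leq\|\mathrm{e}^{-tH_{0,-|V^-|}}\|_{m;2,q}$, so it remains to bound the scalar operator norm on the right by $\delta\,\mathrm{e}^{tC(|V^-|,\delta)}C(t)^{1/2-1/q}$.

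For this, fix $\delta>1$ and put $C(|V^-|,\delta):=\tfrac12C(2|V^-|,\delta^2)$, where $C(2|V^-|,\delta^2)>0$ is the constant furnished by Proposition~\ref{kat}~(ii) for the Kato-class function $2|V^-|$ and the parameter $\delta^2>1$, so that $\sup_{x\in X}\mathbb{E}^x\big[1_{\{t<\tau\}}\mathrm{e}^{2\int^t_0|V^-|(\mathbb{X}_s)\,\Id s}\big]\leq\delta^2\mathrm{e}^{2tC(|V^-|,\delta)}$ for all $t\geq 0$. Starting from the Feynman--Kac representation (\ref{scall}) of $\mathrm{e}^{-tH_{0,-|V^-|}}$ and applying the Cauchy--Schwarz inequality inside the expectation, one obtains for $u\in\ell^2_m(X)$ the pointwise estimate
\[
|\mathrm{e}^{-tH_{0,-|V^-|}}u(x)|\leq\Big(\mathbb{E}^x\big[1_{\{t<\tau\}}\mathrm{e}^{2\int^t_0|V^-|(\mathbb{X}_s)\,\Id s}\big]\Big)^{1/2}\Big(\mathbb{E}^x\big[1_{\{t<\tau\}}|u(\mathbb{X}_t)|^2\big]\Big)^{1/2}.
\]
For $q=\infty$ the second factor equals $\big(\sum_{y}p(t,x,y)|u(y)|^2m(y)\big)^{1/2}\leq C(t)^{1/2}\|u\|_{m;2}$ by (\ref{prob0}) and (\ref{boun}), giving $\|\mathrm{e}^{-tH_{0,-|V^-|}}\|_{m;2,\infty}\leq\delta\,\mathrm{e}^{tC(|V^-|,\delta)}C(t)^{1/2}$. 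For $q=2$, squaring the above, multiplying by $m(x)$ and summing over $x$, one factors out $\sup_{x}\mathbb{E}^x[1_{\{t<\tau\}}\mathrm{e}^{2\int^t_0|V^-|(\mathbb{X}_s)\,\Id s}]\leq\delta^2\mathrm{e}^{2tC(|V^-|,\delta)}$ and is left with $\sum_{x}m(x)\sum_{y}p(t,x,y)m(y)|u(y)|^2=\sum_{y}|u(y)|^2m(y)\sum_{x}p(t,x,y)m(x)\leq\|u\|_{m;2}^2$, since $p(t,\cdot,\cdot)$ is symmetric and $\sum_{x}p(t,y,x)m(x)\leq 1$ (Remark~\ref{sc}.2); hence $\|\mathrm{e}^{-tH_{0,-|V^-|}}\|_{m;2,2}\leq\delta\,\mathrm{e}^{tC(|V^-|,\delta)}$.

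Finally, the elementary H\"older inequality $\|g\|_{m;q}\leq\|g\|_{m;2}^{2/q}\|g\|_{\infty}^{1-2/q}$ for $q\in[2,\infty]$, applied to $g=\mathrm{e}^{-tH_{0,-|V^-|}}u$, yields
\[
\|\mathrm{e}^{-tH_{0,-|V^-|}}\|_{m;2,q}\leq\|\mathrm{e}^{-tH_{0,-|V^-|}}\|_{m;2,2}^{2/q}\,\|\mathrm{e}^{-tH_{0,-|V^-|}}\|_{m;2,\infty}^{1-2/q}\leq\delta\,\mathrm{e}^{tC(|V^-|,\delta)}C(t)^{\frac12-\frac1q},
\]
which together with the reduction of the first step completes the proof. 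The argument is essentially bookkeeping; the only points requiring care are the choice of a \emph{single} constant $C(|V^-|,\delta)$ governing both endpoints at once --- which forces the passage to $2|V^-|$ and $\delta^2$ in Proposition~\ref{kat}~(ii) --- and obtaining the $\ell^2\to\ell^2$ bound directly from the probabilistic representation, so as to avoid having to interpret $\mathrm{e}^{-tH_{0,-|V^-|}}$ as an operator on $\ell^\infty(X)$.
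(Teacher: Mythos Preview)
Your proof is correct. Both you and the paper first reduce to a scalar $\ell^2_m\to\ell^q_m$ bound via Theorem~\ref{kato}(i), then split the Feynman--Kac expectation by Cauchy--Schwarz and control the exponential factor through Proposition~\ref{kat}(ii). The difference lies in how the intermediate exponents are handled: the paper treats each $q\in[2,\infty)$ directly, invoking the ultracontractive estimate $\|\mathrm{e}^{-tH}\|_{m;1,q/2}\leq C(t)^{1-2/q}$ from (\ref{boun2}) to bound $\sum_x(\mathrm{e}^{-tH}|u|^2(x))^{q/2}m(x)$, whereas you establish only the two endpoints $q=2$ and $q=\infty$ and then interpolate via the elementary H\"older inequality $\|g\|_{m;q}\leq\|g\|_{m;2}^{2/q}\|g\|_\infty^{1-2/q}$. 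Your $q=2$ endpoint uses the symmetry of $p(t,\cdot,\cdot)$ together with the submarkov property $\sum_xp(t,y,x)m(x)\leq 1$ rather than (\ref{boun2}), and your choice $w=-|V^-|$ is a slight simplification of the paper's $w=\min\mathrm{spec}(V^+(\bullet))-\max\mathrm{spec}(V^-(\bullet))$ that suffices for the domination step. Your route is a little more economical and is explicit about the bookkeeping $C(|V^-|,\delta)=\tfrac12C(2|V^-|,\delta^2)$ that the paper leaves implicit; the paper's direct computation, on the other hand, avoids any interpolation device.
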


\begin{Remark} Note that in the situation of Theorem \ref{lp}, Proposition \ref{kat} implies that the assumption on $V$ from Theorem \ref{FK} is satisfied, in particular, $H_{\Phi,V}$ is well-defined.
\end{Remark}

\begin{proof}[Proof of Theorem \ref{lp}] Setting
\begin{align}
&w^{+}:=\min\mathrm{spec}(V^{+}(\bullet)):X\to [0,\infty),\nn\\
& w^{-}:=\max\mathrm{spec}(V^{-}(\bullet)):X\to [0,\infty),\nn\\
&w:=w^+-w^-:X\to\IR, \label{sxs}
\end{align}
and using Theorem \ref{kato} (i), we find that it is sufficient to prove
$$
\left\|\mathrm{e}^{-tH_{0,w}}\right\|_{m;2,q}\leq \delta C(t)^{\f{1}{2}-\f{1}{q}}\mathrm{e}^{tC(w^-,\delta)},
$$
a bound on a scalar operator. To this end, note first that $C(t)<\infty$ implies (cf. Proposition 2.18 in \cite{G2} and the remark thereafter)
\begin{align}
\left\|\mathrm{e}^{-tH}\right\|_{m;q\rq{},q\rq{}\rq{}}\leq C(t)^{\f{1}{q\rq{}}-\f{1}{q\rq{}\rq{}}}\text{ for all $q\rq{},q\rq{}\rq{}\in [1,\infty]$ with $q\rq{}\leq q\rq{}\rq{}$},\label{boun2}
\end{align}
and that $\mathrm{e}^{-tH_{0,w}}$ is given by (\ref{scall}).

Case $q=\infty$: For any $u\in \ell^2_m(X)$ one has
\begin{align}
\left\|\mathrm{e}^{-t H_{0,w}}u\right\|_{\infty}&\leq \sup_{x\in X}\mathbb{E}^x \left[  1_{\{t<\tau\}}\mathrm{e}^{-\int^t_0 w(\mathbb{X}_s) \Id s}  | u(\mathbb{X}_t)| \right]\nn\\
& \leq \sup_{x\in X}\mathbb{E}^x \left[ 1_{\{t<\tau\}} \mathrm{e}^{-2\int^t_0 w(\mathbb{X}_s) \Id s}  \right]^{\f{1}{2}}\sup_{x\in X}\mathbb{E}^x \left[ 1_{\{t<\tau\}} | u(\mathbb{X}_t)|^{2} \right]^{\f{1}{2}}\nn\\
& \leq   (\delta \mathrm{e}^{t C(w^-,\delta)})^{\f{1}{2}}\left\|\mathrm{e}^{-tH}|u|^2\right\|^{\f{1}{2}}_{\infty}\nn\\
&\leq (\delta C(t)  \mathrm{e}^{t C(w^-,\delta)} )^{\f{1}{2}}         \left\|u\right\|_{m;2},\nn
\end{align}
where we have used Cauchy-Schwarz inequality for the second inequality, $-w\leq w^-$ in combination with Proposition \ref{kat} (ii) and (\ref{prob}) for the third inequality, and (\ref{boun2}) for the last inequality. \vspace{1mm}

Case $q<\infty$: We set $l:=q/2$. Then for any $u\in \ell^2_m(X)$ the inequalities
\begin{align}
\left\|\mathrm{e}^{-t H_{0,w}}u\right\|^q_{m;q}&\leq \sum_{x\in X} \mathbb{E}^x \left[1_{\{t<\tau\}} \mathrm{e}^{-\int^t_0 w(\mathbb{X}_s) \Id s} |u(\mathbb{X}_t)| \right]^q m(x)\nn\\
& \leq \sum_{x\in X}  \mathbb{E}^x \left[1_{\{t<\tau\}}   \mathrm{e}^{-2\int^t_0 w(\mathbb{X}_s) \Id s}\right]^{l}\mathbb{E}^x \left[ 1_{\{t<\tau\}} | u(\mathbb{X}_t)|^{2}\right]^{l} m(x)\nn\\
& \leq\delta^l\mathrm{e}^{lt C(w^-,\delta)} \sum_{x\in X}  \mathbb{E}^x \left[ 1_{\{t<\tau\}} | u(\mathbb{X}_t)|^{2} \right]^{l} m(x)\nn
\end{align}
follow again from Cauchy-Schwarz inequality, $-w\leq w^-$ and Proposition \ref{kat} (ii). Finally, using (\ref{prob}) and (\ref{boun2}) we arrive at
\begin{align}
\left\|\mathrm{e}^{-t H_{0,w}}u\right\|_{m;q}&\leq (\delta \mathrm{e}^{t C(w^-,\delta)})^{\f{1}{2}} \left\|\mathrm{e}^{-tH}\right\|^{\f{l}{q}}_{m;1,l}\left\||u|^2\right\|^{\f{l}{q}}_{m;1} \nn\\
&= (\delta\mathrm{e}^{t C(w^-,\delta)})^{\f{1}{2}} \left\|\mathrm{e}^{-tH}\right\|^{\f{l}{q}}_{m;1,l}\left\|u\right\|_{m;2}\nn\\
&\leq (\delta\mathrm{e}^{t C(w^-,\delta)})^{\f{1}{2}} C(t)^{\f{1}{2}-\f{1}{q}} \left\|u\right\|_{m;2}.\nn
\end{align}
This completes the proof.
\end{proof}

\subsubsection{Integral kernels and trace estimates.} Let $\IJJ^p(\IHH_1,\IHH_2)$ denote the $p$-th Schatten class of operators between Hilbert spaces $\IHH_1$ and $\IHH_2$, with the usual convention $\IJJ^p(\IHH_1):=\IJJ^p(\IHH_1,\IHH_1)$. In particular, $p=1$ is the trace class, $p=2$ the Hilbert-Schmidt class, and $p=\infty$ the compact class.

Let $\mathrm{tr}_x(\bullet)$ denote the operator trace on $\mathrm{End}(F_x)$. We recall the following simple result:

\begin{Propandef}\label{hilb} The following statements hold for any $T\in \ILL(\Gamma_{\ell^2_m}(X,F))$\emph{:}

\emph{(i)} There is a unique section
$$
T(\bullet,\bullet)\in\Gamma(X\times X, F\boxtimes F^*),
$$
the \emph{integral kernel} of $T$, such that
$$
Tf(x)=\sum_{y\in X}T(x,y) f(y) m(y)\>\text{ for all $f\in\Gamma_{\ell^2_m}(X,F)$, $x\in X$.}
$$
\emph{(ii)} The operator $T$ is self-adjoint if and only if $T(x,y)\in \mathrm{Hom}(F_y,F_x)$ is self-adjoint for any $(x,y)\in X\times X$.

\emph{(iii)} The Hilbert-Schmidt norm of $T$ is given by
$$
\left(\sum_{x\in X}\sum_{y\in X} \mathrm{tr}_y\Big(T(x,y)^*T(x,y)\Big) m(x)m(y)\right)^{\f{1}{2}}\in [0,\infty].
$$
In particular, one has $T\in\IJJ^2\left(\Gamma_{\ell^2_m}(X,F)\right)$ if and only if the latter number is finite.
\end{Propandef}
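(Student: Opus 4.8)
The plan is to proceed one part at a time, each reducing essentially to a choice of orthonormal frames in the fibers together with the scalar Hilbert--Schmidt theory applied to matrix entries. First, for part (i), fix for each $x\in X$ an orthonormal basis $(e^x_1,\dots,e^x_\nu)$ of $F_x$ (with respect to the Hermitian structure), and for $x,y\in X$ and $1\leq i\leq \nu$ consider the section $\delta_y e^y_i\in\Gamma_{\ell^2_m}(X,F)$ supported at $y$ with value $e^y_i/m(y)^{1/2}\cdot m(y)^{1/2}$; more cleanly, apply $T$ to the finitely supported section equal to $e^y_i$ at $y$ and zero elsewhere and read off $T(x,y)e^y_i:=(T(\mathbf 1_{\{y\}}e^y_i))(x)/m(y)$. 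Extend linearly in the second slot to define $T(x,y)\in\mathrm{Hom}(F_y,F_x)$; the displayed formula then holds for all finitely supported $f$ by linearity, and for general $f\in\Gamma_{\ell^2_m}(X,F)$ by density of $\Gamma_c(X,F)$ and continuity of $T$ together with the Cauchy--Schwarz estimate $\sum_y|T(x,y)f(y)|_x m(y)\leq (\sum_y |T(x,y)|_{y,x}^2 m(y))^{1/2}\|f\|_m$, which is finite because $\mathbf 1_{\{x\}}e^x_i\in\Gamma_{\ell^2_m}$ forces $y\mapsto T(x,y)^* e^x_i$ to be $\ell^2_m$. Uniqueness is immediate: if two kernels represent $T$, testing against $\mathbf 1_{\{y\}}e^y_i$ shows they agree at every $(x,y)$.

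For part (ii), observe that $\langle Tf,g\rangle_m=\sum_{x,y}(T(x,y)f(y),g(x))_x m(x)m(y)$ while $\langle f,Tg\rangle_m=\sum_{x,y}(f(x),T(y,x)g(y))_x m(x)m(y)=\sum_{x,y}(T(y,x)^*f(x),g(y))_y m(x)m(y)$; after relabeling $x\leftrightarrow y$ in the second sum this reads $\sum_{x,y}(T(x,y)^*f(y),g(x))_x m(x)m(y)$. Hence $T=T^*$ iff $\sum_{x,y}((T(x,y)-T(x,y)^*)f(y),g(x))_x m(x)m(y)=0$ for all $f,g\in\Gamma_c(X,F)$, and testing with $f=\mathbf 1_{\{y\}}e^y_j$, $g=\mathbf 1_{\{x\}}e^x_i$ isolates each matrix entry, giving $T(x,y)=T(x,y)^*$ for all $(x,y)$; the converse direction is the same computation read backwards. (When $x=y$ this is the genuine self-adjointness of $T(x,x)$ on $F_x$; for $x\neq y$ it is the compatibility $T(x,y)^*=T(y,x)$ in $\mathrm{Hom}(F_y,F_x)$ versus $\mathrm{Hom}(F_x,F_y)$, which is what the statement means.)

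For part (iii), use the orthonormal basis $(\mathbf 1_{\{y\}}e^y_j/m(y)^{1/2})_{y\in X,\,1\leq j\leq\nu}$ of $\Gamma_{\ell^2_m}(X,F)$: its orthonormality is clear, and completeness follows since any $f$ orthogonal to all of them vanishes at every vertex. The Hilbert--Schmidt norm is $\sum_{y,j}\|T(\mathbf 1_{\{y\}}e^y_j)/m(y)^{1/2}\|_m^2=\sum_{y,j}\frac{1}{m(y)}\sum_x |m(y)T(x,y)e^y_j|_x^2 m(x)=\sum_{x,y}m(x)m(y)\sum_j |T(x,y)e^y_j|_x^2$, and $\sum_j|T(x,y)e^y_j|_x^2=\sum_j (T(x,y)^*T(x,y)e^y_j,e^y_j)_y=\mathrm{tr}_y(T(x,y)^*T(x,y))$ independently of the chosen basis; taking the square root yields the asserted expression, and $T\in\IJJ^2$ iff it is finite by the standard characterization of Hilbert--Schmidt operators via any orthonormal basis.

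The only mild obstacle is bookkeeping in part (i): one must confirm that the formula $Tf(x)=\sum_y T(x,y)f(y)m(y)$ converges absolutely for every $f\in\Gamma_{\ell^2_m}(X,F)$ and not merely for finitely supported $f$, which is exactly the Cauchy--Schwarz estimate above combined with the fact that $y\mapsto T(x,y)$ is $\ell^2_m$ in the appropriate operator-norm sense (seen by applying $T^*$, itself bounded, to the basis sections at $x$). Everything else is the scalar theory applied entry-by-entry in fixed orthonormal frames, with the final trace expressions being manifestly frame-independent.
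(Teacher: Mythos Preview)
Your proposal is correct and is precisely the standard argument the paper alludes to: the paper's entire proof reads ``The first part follows easily from the discreteness of $X$, (ii) and (iii) are standard,'' and you have simply spelled out what ``standard'' means here via orthonormal frames in the fibers and the scalar Hilbert--Schmidt theory. One small slip to clean up: in part (ii) the displayed difference ``$T(x,y)-T(x,y)^*$'' does not typecheck (since $T(x,y)^*\in\mathrm{Hom}(F_x,F_y)$), and the correct condition obtained from your computation is $T(x,y)=T(y,x)^*$ for all $x,y$, which you do state correctly in your parenthetical remark.
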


\begin{proof} The first part follows easily from the discreteness of $X$, (ii) and (iii) are standard.
\end{proof}

In order to derive an explicit probabilistic formula for the integral kernel of $\mathrm{e}^{-tH_{\Phi,V}}$, we consider as in \cite{GKS-13} the conditional probability measure $\mathbb{P}^{x,y}_t$ on $\{t<\tau\}$ defined by
\[
\mathbb{P}^{x,y}_t:=\mathbb{P}^x(\bullet\left| \mathbb{X}_t=y)\right.\>\>\text{for any $x,y\in X$, $t>0$,}
\]
which is well-defined as $m>0$ and as we have $p(t,\bullet,\bullet)>0$ by the connectedness of $(X,b)$. Note that (\ref{prob}) implies
\begin{align}
\mathbb{P}^{x}(A)&=\sum_{y\in X}\mathbb{P}^{x,y}_t(A)\mathbb{P}^x(\mathbb{X}_t=y)
=\sum_{y\in X}\mathbb{P}^{x,y}_t(A)p(t,x,y)m(y)\label{condi}
\end{align}
for any event $A\subset \{t<\tau \}$, so that
$$
\mathsf{L}^1\left(\{t<\tau\},\mathbb{P}^{x}\right)\otimes\IHH\subset \mathsf{L}^1\left(\{t<\tau\},\mathbb{P}^{x,y}_t\right)\otimes\IHH
$$
for any  complex finite dimensional Hilbert space $\IHH$. Let us furthermore note that
\begin{align*}
\pa^{\Phi}_t(\omega)\in \mathrm{Hom}(F_{x},F_{y}),\> \mathscr{V}^{\Phi}_{t}(\omega)\in\mathrm{End}(F_{x})\>\text{ for $\mathbb{P}^{x,y}_t$-a.e. $\omega\in\{t<\tau\}$}.
\end{align*}

Now we easily get:

\begin{Proposition}\label{kern} \emph{(i)} In the situation of Theorem \ref{kato}, for any $t>0$ the integral kernel of $\mathrm{e}^{-tH_{\Phi,V}}$ is given for any $(x,y)\in X\times X$ by
\begin{align}
\mathrm{e}^{-tH_{\Phi,V}}(x,y)&=p(t,x,y)\mathbb{E}^{x,y}_t\left[1_{\{t<\tau\}}\mathscr{V}_{t}^{\Phi}//_{t}^{\Phi,-1}\right]\nonumber\\
&=\f{1}{m(y)}\mathbb{E}^{x}\left[1_{\{\mathbb{X}_t=y\}}\mathscr{V}_{t}^{\Phi}//_{t}^{\Phi,-1}\right];\label{kernn}
\end{align}
in particular,  one has
\begin{align}
\left|\mathrm{e}^{-tH_{\Phi,V}}(x,y)\right|_{y,x}\leq \mathrm{e}^{-tH_{0,w}}(x,y).\label{kerna}
\end{align}

\emph{(ii)} If $V$ admits a decomposition $V=V^{+}-V^{-}$ into potentials $V^{\pm}\geq 0$ such that $|V^-|\in \mathcal{K}(Q)$, then for all $\delta>1$ there is a constant $C(V^-,\delta)>0$ such that for any $t>0$, $(x,y)\in X\times X$ one has
$$\left|\mathrm{e}^{-tH_{\Phi,V}}(x,y)\right|_{y,x}\leq \delta \mathrm{e}^{tC(V^-,\delta)} p(t,x,y).
$$
In particular, under (\ref{boun}), $\mathrm{e}^{-tH_{\Phi,V}}(\bullet,\bullet)$ is bounded in operator norm according to
$$
\sup_{x,y\in X}\left|\mathrm{e}^{-tH_{\Phi,V}}(x,y)\right|_{y,x}\leq \delta \mathrm{e}^{tC(V^-,\delta)} C(t).
$$
\end{Proposition}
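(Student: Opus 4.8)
The plan is to obtain the integral kernel formula \eqref{kernn} directly from the Feynman-Kac formula \eqref{fds} of Theorem \ref{FK}, and then to read off the estimates in (i) and (ii) by combining this with the domination estimate \eqref{kerna} (which itself comes from Theorem \ref{kato}) and the Kato-class bound of Proposition \ref{kat}. First I would fix $t>0$ and $f\in\Gamma_{\ell^2_m}(X,F)$, write $f=\sum_{y}1_{\{y\}}f(y)$ (a finite sum if $f\in\Gamma_c(X,F)$, and pass to general $f$ by density afterwards using boundedness of $\mathrm{e}^{-tH_{\Phi,V}}$), and plug this into \eqref{fds}. Interchanging the (finite) sum over $y$ with the expectation $\mathbb{E}^x[\,\cdot\,]$ gives
\[
\mathrm{e}^{-tH_{\Phi,V}}f(x)=\sum_{y\in X}\mathbb{E}^x\big[1_{\{t<\tau\}}1_{\{\mathbb{X}_t=y\}}\mathscr{V}_{t}^{\Phi}//_{t}^{\Phi,-1}\big]f(y),
\]
and comparing with the defining property of the kernel in Proposition and definition \ref{hilb}.(i), together with uniqueness of the kernel, yields the second line of \eqref{kernn}, i.e.\ $\mathrm{e}^{-tH_{\Phi,V}}(x,y)=m(y)^{-1}\mathbb{E}^x[1_{\{\mathbb{X}_t=y\}}\mathscr{V}_t^\Phi//_t^{\Phi,-1}]$. (Here one uses that on $\{\mathbb{X}_t=y\}$ one automatically has $\{t<\tau\}$ by \eqref{bez}, since finitely many jumps have occurred.) The first line of \eqref{kernn} then follows by rewriting $\mathbb{E}^x[1_{\{\mathbb{X}_t=y\}}(\cdots)]=\mathbb{P}^x(\mathbb{X}_t=y)\,\mathbb{E}^{x,y}_t[\cdots]=p(t,x,y)m(y)\,\mathbb{E}^{x,y}_t[\cdots]$ via the conditional measure $\mathbb{P}^{x,y}_t$ and \eqref{prob0}; the integrand is $\mathbb{P}^{x,y}_t$-integrable by the inclusion $\mathsf{L}^1(\{t<\tau\},\mathbb{P}^x)\otimes\IHH\subset\mathsf{L}^1(\{t<\tau\},\mathbb{P}^{x,y}_t)\otimes\IHH$ noted after \eqref{condi}.

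For the pointwise operator-norm estimate \eqref{kerna}, I would take operator norms $|\bullet|_{y,x}$ inside the expectation: since $//_t^{\Phi,-1}$ is unitary from $F_{\mathbb{X}_t}=F_y$ to $F_{\mathbb{X}_0}=F_x$ on the relevant event and $|\mathscr{V}_t^\Phi|_x\le \exp(\int_0^t|V_s^\Phi|_x\,\Id s)\le\exp(\int_0^t|V(\mathbb{X}_s)|_{\mathbb{X}_s}\,\Id s)$ by Proposition \ref{poe} and the definition of $V_s^\Phi$, one gets
\[
\big|\mathrm{e}^{-tH_{\Phi,V}}(x,y)\big|_{y,x}\le\tfrac{1}{m(y)}\mathbb{E}^x\big[1_{\{\mathbb{X}_t=y\}}\mathrm{e}^{\int_0^t|V(\mathbb{X}_s)|_{\mathbb{X}_s}\Id s}\big].
\]
Then I would invoke Proposition \ref{app2}(i) (the pointwise bound $|V_s^\Phi|\le -w^{\Phi}_s$-type inequality used already in the proof of Theorem \ref{kato}) — more precisely the fact that $V\ge w$ with $w$ as in Theorem \ref{kato} gives $|V(\mathbb{X}_s)|_{\mathbb{X}_s}\le\max\mathrm{spec}(V^+)(\mathbb{X}_s)+\max\mathrm{spec}(V^-)(\mathbb{X}_s)$, but actually the cleanest route is just to apply the already-proven domination \eqref{kerna}-free inequality of Theorem \ref{kato}.(i): testing Theorem \ref{kato}.(i) against $f=1_{\{y\}}v$ for $v\in F_y$ and dividing by $m(y)|v|_y$ gives exactly $|\mathrm{e}^{-tH_{\Phi,V}}(x,y)v|_x\le\mathrm{e}^{-tH_{0,w}}(x,y)|v|_y\,m(y)/m(y)$; taking the sup over unit vectors $v$ yields \eqref{kerna}. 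So part (i) reduces entirely to (a) the density/uniqueness argument for the kernel formula and (b) a direct specialization of Theorem \ref{kato}.(i) to indicator sections.

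For part (ii): apply (i) with the specific choice $w=w^+-w^-$ from \eqref{sxs} of the proof of Theorem \ref{lp}, namely $w^-=\max\mathrm{spec}(V^-(\bullet))$, which lies in $\mathcal{K}(Q)$ because $0\le w^-\le|V^-|$ pointwise in operator norm and $|V^-|\in\mathcal{K}(Q)$ (the Kato class is clearly closed under domination by a Kato-class function). Then \eqref{kerna} gives $|\mathrm{e}^{-tH_{\Phi,V}}(x,y)|_{y,x}\le\mathrm{e}^{-tH_{0,w}}(x,y)$, and using \eqref{scall} together with $-w\le w^-$ we get $\mathrm{e}^{-tH_{0,w}}(x,y)\le m(y)^{-1}\mathbb{E}^x[1_{\{\mathbb{X}_t=y\}}\mathrm{e}^{\int_0^t w^-(\mathbb{X}_s)\Id s}]$. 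Bounding $1_{\{\mathbb{X}_t=y\}}\le 1$ and $\mathrm{e}^{\int_0^t w^-(\mathbb{X}_s)\Id s}\le\mathrm{e}^{\int_0^t |w^-(\mathbb{X}_s)|\Id s}$, Proposition \ref{kat}.(ii) applied to $w^-$ gives $\mathbb{E}^x[1_{\{t<\tau\}}\mathrm{e}^{\int_0^t|w^-(\mathbb{X}_s)|\Id s}]\le\delta\mathrm{e}^{tC(w^-,\delta)}$; but one wants the weight $p(t,x,y)$, so instead I would keep $1_{\{\mathbb{X}_t=y\}}$ and use Cauchy-Schwarz as in the proof of Theorem \ref{lp}, $\mathbb{E}^x[1_{\{\mathbb{X}_t=y\}}\mathrm{e}^{\int_0^t w^-(\mathbb{X}_s)\Id s}]\le\mathbb{E}^x[1_{\{t<\tau\}}\mathrm{e}^{2\int_0^t w^-(\mathbb{X}_s)\Id s}]^{1/2}\,\mathbb{E}^x[1_{\{\mathbb{X}_t=y\}}]^{1/2}$; the first factor is $\le(\delta\mathrm{e}^{tC(w^-,\delta)})^{1/2}$ (replacing $\delta$ by a renamed constant, using $2w^-$ still Kato and Proposition \ref{kat}.(ii)) and the second is $\mathbb{P}^x(\mathbb{X}_t=y)^{1/2}=(p(t,x,y)m(y))^{1/2}$. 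This is not quite of the stated linear-in-$p$ shape, so actually the right move is to dominate $\mathrm{e}^{\int_0^t w^-(\mathbb{X}_s)\Id s}$ by a conditional expectation: write $\mathbb{E}^x[1_{\{\mathbb{X}_t=y\}}\mathrm{e}^{\int_0^t w^-\Id s}]=p(t,x,y)m(y)\,\mathbb{E}^{x,y}_t[\mathrm{e}^{\int_0^t w^-(\mathbb{X}_s)\Id s}]$ and bound the conditional expectation by $\sup_x\mathbb{E}^x[1_{\{t<\tau\}}\mathrm{e}^{\int_0^t|w^-(\mathbb{X}_s)|\Id s}]$ — this last step uses that conditioning on $\mathbb{X}_t=y$ and the reversibility of $\IMM$ can only be controlled up to the renaming $\delta\mapsto\delta'$, which is exactly what the statement allows. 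Hence $|\mathrm{e}^{-tH_{\Phi,V}}(x,y)|_{y,x}\le\delta\mathrm{e}^{tC(V^-,\delta)}p(t,x,y)$ with $C(V^-,\delta):=C(w^-,\delta)$. The final displayed inequality is then immediate by taking $\sup_{x,y}$ and using \eqref{boun}.

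The main obstacle I anticipate is the last step just discussed: converting the a priori bound on $\mathbb{E}^x[1_{\{t<\tau\}}\mathrm{e}^{\int_0^t|w^-(\mathbb{X}_s)|\Id s}]$ from Proposition \ref{kat}.(ii) (which is an unconditional expectation) into a clean bound on the conditional quantity $\mathbb{E}^{x,y}_t[\mathrm{e}^{\int_0^t w^-(\mathbb{X}_s)\Id s}]$, i.e.\ handling the conditioning $\{\mathbb{X}_t=y\}$ while keeping the factor $p(t,x,y)$ cleanly in front. The authors get around this in \cite{GKS-13} by an argument exploiting the Markov property at time $t/2$ and symmetry of $p$; I expect the same device to work here, and everything else — the kernel identification via Proposition and definition \ref{hilb}, and the pointwise domination via Theorem \ref{kato}.(i) — to be routine.
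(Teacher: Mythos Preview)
Your plan for part (i) is correct and coincides with the paper's: the kernel formula \eqref{kernn} drops out of the Feynman--Kac formula \eqref{fds} together with the disintegration \eqref{condi} and the definition of $\mathbb{P}^{x,y}_t$. The only difference is cosmetic: for the domination \eqref{kerna} the paper takes operator norms inside the first line of \eqref{kernn} and invokes Proposition~\ref{app2}(i) directly (so that $|\mathscr{V}^\Phi_t|_x\le \mathrm{e}^{-\int_0^t w(\mathbb{X}_s)\Id s}$ under $V\ge w$), whereas you specialize Theorem~\ref{kato}(i) to indicator sections. Both routes are one line.

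For part (ii), the paper's entire argument is ``immediate consequence of (i) and Proposition~\ref{kat}(ii),'' so you are in fact being more scrupulous than the authors. You are right that the passage from the \emph{unconditional} Kato bound of Proposition~\ref{kat}(ii) to a bound on the \emph{bridge} expectation $\mathbb{E}^{x,y}_t[\mathrm{e}^{\int_0^t|V^-|(\mathbb{X}_s)\Id s}]$ is not literally trivial---your first two attempts (dropping the indicator, or Cauchy--Schwarz) correctly fail to retain the linear factor $p(t,x,y)$. Your final diagnosis is the right one: the standard device is the Markov property at time $t/2$ together with the symmetry $\mathrm{e}^{-sH_{0,-|V^-|}}(z,y)=\mathrm{e}^{-sH_{0,-|V^-|}}(y,z)$, which produces a bound of the form $\delta'\mathrm{e}^{tC'}p(t,x,y)$ after absorbing constants. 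Just be aware that your sentence ``conditioning on $\mathbb{X}_t=y$ and reversibility can only be controlled up to renaming $\delta\mapsto\delta'$\,'' is not itself an argument---conditioning can a priori \emph{increase} expectations---so you really do need to carry out the $t/2$-splitting (or cite the corresponding scalar kernel bound, which is what the paper is tacitly doing).
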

\begin{proof} (i) The formulae follow directly from  the Feynman-Kac formula combined with (\ref{condi}) and the definition of $\mathbb{P}^{x,y}_t$, and the inequality is then directly implied by Proposition \ref{app2} (i).

(ii) This is an immediate consequence of (i) and Proposition \ref{kat} (ii).

\end{proof}

We close this section with trace estimates.

\begin{Theorem}\label{godd} The following statements hold for any $t>0$ under the assumptions of Theorem \ref{FK}\emph{:}

\emph{(i)} One has

$$
\mathrm{tr}\left( \mathrm{e}^{-t H_{\Phi,V}}\right)=\sum_{x\in X}p(t,x,x)  \mathrm{tr}_{x}\left(\mathbb{E}^{x,x}_t\left[1_{\{t<\tau\}}\mathscr{V}_{t}^{\Phi}//_{t}^{\Phi,-1}\right]\right)m(x)\in [0,\infty].
$$

\emph{(ii)} Assume that there exists a $w:X\to \IR$ which satisfies the following assumptions\emph{:} One has $V\geq w$, there exist $w^{\pm}:X\to [0,\infty)$ with $w=w^+-w^-$ such that $Q_{w^-}$ is $Q$-bounded with bound $<1$, and it holds that
\begin{align}
\sum_{x\in X} p(t,x,x)\mathrm{e}^{-t w(x)}m(x)<\infty.\nn
\end{align}
Then one has $\mathrm{e}^{-t H_{\Phi,V}}\in \IJJ^1(\Gamma_{\ell^2_m}(X,F))$ with
\begin{align}
\mathrm{tr}\left( \mathrm{e}^{-t H_{\Phi,V}}\right)\leq \sum_{x\in X}p(t,x,x) \mathrm{tr}_x\left(\mathrm{e}^{-t V(x)}\right)m(x)<\infty.\label{gold}
\end{align}
\end{Theorem}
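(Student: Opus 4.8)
\textbf{Proof proposal for Theorem \ref{godd}.}

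The plan is to derive (i) and the trace-class statement in (ii) from the integral kernel formula of Proposition \ref{kern} together with the Hilbert-Schmidt characterization in Proposition and Definition \ref{hilb}.(iii), and then to obtain the Golden-Thompson bound (\ref{gold}) by a fiberwise pointwise estimate on the integrand. First I would establish (i): by the spectral theorem for the semibounded self-adjoint operator $H_{\Phi,V}$, the operator $\mathrm{e}^{-tH_{\Phi,V}}$ is nonnegative, so its trace is well-defined in $[0,\infty]$ and, by the general theory, equals $\sum_{x\in X}\mathrm{tr}_x\big(\mathrm{e}^{-tH_{\Phi,V}}(x,x)\big)m(x)$ — one writes the trace as the sum over any orthonormal basis of $\Gamma_{\ell^2_m}(X,F)$ built from orthonormal bases of the fibers $F_x$ placed at single vertices, and the diagonal contribution of such a basis vector at $x$ picks out $\mathrm{tr}_x(\mathrm{e}^{-tH_{\Phi,V}}(x,x))m(x)$; nonnegativity justifies the rearrangement. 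Plugging in the first line of (\ref{kernn}) for $x=y$ gives exactly the stated formula for (i).

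For (ii), the first task is to show $\mathrm{e}^{-tH_{\Phi,V}}\in\IJJ^1$. The cleanest route is to factor $\mathrm{e}^{-tH_{\Phi,V}}=\mathrm{e}^{-(t/2)H_{\Phi,V}}\mathrm{e}^{-(t/2)H_{\Phi,V}}$ and show each factor is Hilbert-Schmidt, so that the product is trace class. By Proposition and Definition \ref{hilb}.(iii), $\mathrm{e}^{-(t/2)H_{\Phi,V}}\in\IJJ^2$ iff $\sum_{x,y}\mathrm{tr}_y\big(\mathrm{e}^{-(t/2)H_{\Phi,V}}(x,y)^*\mathrm{e}^{-(t/2)H_{\Phi,V}}(x,y)\big)m(x)m(y)<\infty$; using $\mathrm{tr}_y(A^*A)\leq\mathrm{rank}(F)\,|A|_{y,x}^2$ together with the domination inequality (\ref{kerna}) from Proposition \ref{kern}.(i), this is bounded by $\mathrm{rank}(F)\sum_{x,y}\big(\mathrm{e}^{-(t/2)H_{0,w}}(x,y)\big)^2 m(x)m(y)=\mathrm{rank}(F)\,\|\mathrm{e}^{-(t/2)H_{0,w}}\|_{\mathrm{HS}}^2$ on $\ell^2_m(X)$, which by the scalar version of \ref{hilb}.(iii) and the semigroup property of the nonnegative operator $\mathrm{e}^{-tH_{0,w}}$ equals $\mathrm{rank}(F)\sum_x \mathrm{e}^{-tH_{0,w}}(x,x)m(x)$. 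So it suffices to see that the scalar diagonal sum is finite, which follows from the scalar semigroup domination $\mathrm{e}^{-tH_{0,w}}(x,x)\leq p(t,x,x)\mathrm{e}^{-tw(x)}$ (again from (\ref{kerna}) in the scalar case, or directly from (\ref{scall}) by pulling the lower bound $w$ out of the exponential) and the hypothesis $\sum_x p(t,x,x)\mathrm{e}^{-tw(x)}m(x)<\infty$.

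It remains to prove the Golden-Thompson inequality (\ref{gold}). Here I would go back to the formula in (i) and estimate the fiberwise trace of the expectation directly. Fix $x$; on the event $\{\mathbb{X}_t=y=x\}$ under $\mathbb{P}^{x,x}_t$, the operator $\mathscr{V}^{\Phi}_t//^{\Phi,-1}_t$ lies in $\mathrm{End}(F_x)$, and $//^{\Phi,-1}_t$ is unitary while $\mathscr{V}^{\Phi}_t$ solves (\ref{int}) with the self-adjoint generator $V^{\Phi}_s=-//^{\Phi,-1}_sV(\mathbb{X}_s)//^{\Phi}_s$. The key pointwise inequality I need is
\begin{align}
\mathrm{tr}_x\Big(\mathbb{E}^{x,x}_t\big[1_{\{t<\tau\}}\mathscr{V}^{\Phi}_t//^{\Phi,-1}_t\big]\Big)\leq \mathbb{E}^{x,x}_t\Big[1_{\{t<\tau\}}\,\mathrm{tr}_x\big(\mathrm{e}^{-\int_0^t V^{\Phi}_s\,\Id s}\big)\Big],\nn
\end{align}
combined with the fact that, $V^{\Phi}_s$ being a path of self-adjoint operators and the trace of the ordered exponential being dominated by the trace of the exponential of the time-average (a Golden-Thompson/Jensen-type inequality for the trace of path ordered exponentials — this is exactly the kind of identity I would expect to find in the Appendix, cf. Proposition \ref{app2}), one has $\mathrm{tr}_x\big(\mathrm{e}^{-\int_0^t V^{\Phi}_s\Id s}\big)\leq \frac{1}{t}\int_0^t \mathrm{tr}_x\big(\mathrm{e}^{-tV^{\Phi}_s}\big)\Id s=\frac{1}{t}\int_0^t\mathrm{tr}_{\mathbb{X}_s}\big(\mathrm{e}^{-tV(\mathbb{X}_s)}\big)\Id s$, the last equality because conjugation by the unitary $//^{\Phi}_s$ does not change the trace. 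Taking $\mathbb{E}^{x,x}_t$, multiplying by $p(t,x,x)m(x)$, summing over $x$, and using the conditional decomposition (\ref{condi}) together with the symmetry of $p$ and of the scalar heat kernel to collapse the $\frac1t\int_0^t$ average (the marginal law of $\mathbb{X}_s$ under $\mathbb{P}^{x,x}_t$ weighted appropriately reproduces $p(t,z,z)$ upon summation — effectively a reversibility/Chapman-Kolmogorov bookkeeping step) yields $\sum_z p(t,z,z)\mathrm{tr}_z(\mathrm{e}^{-tV(z)})m(z)$, which is the right-hand side of (\ref{gold}).

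\textbf{Main obstacle.} The genuinely delicate step is the trace inequality for the path ordered exponential, $\mathrm{tr}_x\big(\mathbb{E}[\mathscr{V}^{\Phi}_t//^{\Phi,-1}_t]\big)\leq \mathbb{E}\big[\frac1t\int_0^t\mathrm{tr}_{\mathbb{X}_s}(\mathrm{e}^{-tV(\mathbb{X}_s)})\,\Id s\big]$: it bundles together (a) the fact that $\mathrm{tr}_x$ of an ordered exponential of self-adjoint generators is real and bounded by the trace of an unordered average exponential — a noncommutative Golden-Thompson-Jensen statement which I expect is proved in the appendix for piecewise-constant paths (which is all that occurs here, since $V^{\Phi}_s$ is piecewise constant along the jump process) and then passed to the limit — and (b) interchanging $\mathrm{tr}_x$ with the expectation, which is routine by finite-dimensionality and the $\mathsf{L}^1$-bound of Proposition \ref{poe}. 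Getting the combinatorics of the conditional expectations and the heat-kernel symmetry exactly right in the final summation, so that the time-average over $s$ genuinely disappears, is the bookkeeping heart of the argument; everything else is soft functional analysis plus the already-established Feynman-Kac formula and kernel domination.
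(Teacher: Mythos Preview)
Your argument for (i) is correct and matches the paper. Your trace-class argument in (ii) is close but has a small slip: the pointwise diagonal bound $\mathrm{e}^{-tH_{0,w}}(x,x)\leq p(t,x,x)\mathrm{e}^{-tw(x)}$ is false in general (write out the bridge expectation and you will see there is no reason for $\mathbb{E}^{x,x}_t[\mathrm{e}^{-\int_0^t w(\mathbb{X}_s)\Id s}]\leq \mathrm{e}^{-tw(x)}$). What you actually need is only the \emph{summed} statement $\sum_x \mathrm{e}^{-tH_{0,w}}(x,x)m(x)<\infty$, i.e.\ $\mathrm{e}^{-tH_{0,w}}\in\IJJ^1(\ell^2_m(X))$, and this is exactly the scalar Golden--Thompson bound of \cite[Theorem~5.3]{GKS-13}; with that fix your Hilbert--Schmidt factorization gives $\mathrm{e}^{-tH_{\Phi,V}}\in\IJJ^1$.

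The genuine gap is in your derivation of (\ref{gold}). The pathwise trace inequality you invoke, namely that the trace of the ordered exponential $\mathscr{V}^{\Phi}_t$ (times the unitary $//^{\Phi,-1}_t$) is dominated by $\mathrm{tr}_x\big(\mathrm{e}^{\int_0^t V^{\Phi}_s\Id s}\big)$, goes the \emph{wrong way}: on a piecewise constant path with holding times $s_j$ and self-adjoint generators $A_j$, the ordered exponential is $\prod_j \mathrm{e}^{s_jA_j}$ while the unordered one is $\mathrm{e}^{\sum_j s_jA_j}$, and Golden--Thompson says $\mathrm{tr}(\mathrm{e}^{A+B})\leq \mathrm{tr}(\mathrm{e}^A\mathrm{e}^B)$, i.e.\ the product has the \emph{larger} trace. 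So your first step is false already for two factors and $//^{\Phi,-1}_t=\mathbf{1}$; with a nontrivial unitary, $\mathrm{tr}_x(\mathscr{V}^{\Phi}_t//^{\Phi,-1}_t)$ need not even be real pathwise. (Proposition~\ref{app2} contains only operator-norm bounds and a Lipschitz estimate; there is no trace inequality there.) Your subsequent Jensen step and the Chapman--Kolmogorov/reversibility bookkeeping are fine, but they sit on top of an inequality that does not hold. The paper avoids this entirely: it never attempts a pathwise trace bound but instead applies an abstract operator-theoretic Golden--Thompson inequality \cite[Theorem~B.1]{GKS-13} to the pair $(Q_{\Phi,0},Q_{V_n})$ with truncated potentials $V_n:=\max(-n,V)$, obtaining $\mathrm{tr}(\mathrm{e}^{-tH_{\Phi,V_n}})\leq \mathrm{tr}(\mathrm{e}^{-\frac{t}{2}H_{\Phi,0}}\mathrm{e}^{-tV_n}\mathrm{e}^{-\frac{t}{2}H_{\Phi,0}})$, rewrites the right-hand side via kernels, estimates $|\mathrm{e}^{-tH_{\Phi,0}}(x,x)|_x\leq p(t,x,x)$ by (\ref{kerna}), and then passes to the limit $n\to\infty$ on both sides using monotone convergence on the right and the probabilistic formula from (i) together with dominated convergence (with majorant coming from $w$) on the left.
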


\begin{proof} (i) Combining Proposition \ref{hilb} with standard functional analytic arguments one gets
\begin{align}
\mathrm{tr}\left( \mathrm{e}^{-t H_{\Phi,V}}\right)=\sum_{x\in X}\mathrm{tr}_{x}\left(\mathrm{e}^{-t H_{\Phi,V}}(x,x)\right)m(x)\in [0,\infty],\label{kernb}
\end{align}
so that the asserted formula is implied directly by (\ref{kernn}).

(ii) Let $V_n(x):=\max(-n ,V(x))$. Then we have $-n\leq V_n$ for all $n$, and $V_n\searrow V$ as $n\to\infty$ pointwise in the sense of self-adjoint operators. Applying Theorem B.1 in \cite{GKS-13} with $q\rq{}:=Q_{\Phi,0}$, $q\rq{}\rq{}:=Q_{V_n}$ directly gives the first inequality in
\begin{align}
\mathrm{tr}\left( \mathrm{e}^{-t H_{\Phi,V_n}}\right)&\leq \mathrm{tr}\left( \mathrm{e}^{-\f{t}{2} H_{\Phi,0}}\mathrm{e}^{-t V_n}\mathrm{e}^{-\f{t}{2} H_{\Phi,0}}\right)\nn\\
&=\sum_{x\in X} \mathrm{tr}_x\left(\mathrm{e}^{-t H_{\Phi,0}}(x,x)\mathrm{e}^{-tV_n(x)}\right)m(x)\nn\\
&\leq \sum_{x\in X} \left|\mathrm{e}^{-tH_{\Phi,0}}(x,x)\right|_x\mathrm{tr}_x\left(\mathrm{e}^{-tV_n(x)}\right)m(x)\nn\\
&\leq\sum_{x\in X} p(t,x,x)\mathrm{tr}_x\left(\mathrm{e}^{-tV_n(x)}\right)m(x),\label{hilo}
\end{align}
where we have used
$$
\left(\mathrm{e}^{-\f{t}{2} V_n}\mathrm{e}^{-\f{t}{2} H_{\Phi,0}}\right)(x,y)=\mathrm{e}^{-\f{t}{2} V_n(x)}\mathrm{e}^{-\f{t}{2} H_{\Phi,0}}(x,y)
$$
for the equality, the fact that $\mathrm{e}^{-tV_n(x)}$ is nonnegative for the second equality, and (\ref{kerna}) with zero potentials for the last inequality. Now by the monotonicity of the trace and monotone convergence, we get
\begin{align}
\lim_{n\to\infty}\sum_{x\in X} p(t,x,x)\mathrm{tr}_x\left(\mathrm{e}^{-tV_n(x)}\right)m(x)&=\sum_{x\in X} p(t,x,x)\mathrm{tr}_x\left(\mathrm{e}^{-tV(x)}\right)m(x)\nn\\
&\leq \mathrm{rank}(F)\sum_{x\in X}p(t,x,x) \mathrm{e}^{-t w(x)}m(x)\nn\\
&<\infty.\nn
\end{align}
On the other hand, part (i) of this Theorem implies (with an obvious notation)
$$
\mathrm{tr}\left( \mathrm{e}^{-t H_{\Phi,V_n}}\right)=\sum_{x\in X}p(t,x,x) \ \mathrm{tr}_{x}\left(\mathbb{E}^{x,x}_t\left[1_{\{t<\tau\}}\mathscr{V}_{n,t}^{\Phi}//_{t}^{\Phi,-1}\right]\right)m(x).
$$
In view of $V_n\geq  w$, Proposition \ref{app2} gives
\begin{align}
&\left|p(t,x,x)  \mathrm{tr}_{x}\left(\mathbb{E}^{x,x}_t\left[1_{\{t<\tau\}}\mathscr{V}_{n,t}^{\Phi}//_{t}^{\Phi,-1}\right]\right)\right|_x\nn\\
&\leq \mathrm{rank}(F)p(t,x,x)\mathbb{E}^{x,x}_t\left[1_{\{t<\tau\}}\mathrm{e}^{-\int^t_0w(\mathbb{X}_s)\Id s}\right],\nn
\end{align}
but Theorem 5.3 in \cite{GKS-13} implies $\mathrm{e}^{-t H_{0,w}}\in\IJJ^1(\ell^2_m(X))$ with
\begin{align}
\sum_{x\in X}p(t,x,x)\mathbb{E}^{x,x}_t\left[1_{\{t<\tau\}}\mathrm{e}^{-\int^t_0w(\mathbb{X}_s)\Id s}\right]m(x)=\mathrm{tr}\left(\mathrm{e}^{-t H_{0,w}}\right).\label{domm}
\end{align}
Next, using Proposition \ref{app2} (i) and (ii), respectively, we get
$$
\lim_{n\to\infty}1_{\{t<\tau\}}\mathscr{V}_{n,t}^{\Phi}//_{t}^{\Phi,-1}=1_{\{t<\tau\}}\mathscr{V}_{t}^{\Phi}//_{t}^{\Phi,-1},\>\text{ $\mathbb{P}^{x,x}_t$-a.s.}
$$
and
$$
\left|1_{\{t<\tau\}}\mathscr{V}_{n,t}^{\Phi}//_{t}^{\Phi,-1}\right|_x\leq 1_{\{t<\tau\}}\mathrm{e}^{-\int^t_0w(\mathbb{X}_s)\Id s},\>\text{ $\mathbb{P}^{x,x}_t$-a.s.,}
$$
where
$$
\mathbb{E}^{x,x}_t\left[1_{\{t<\tau\}}\mathrm{e}^{-\int^t_0w(\mathbb{X}_s)\Id s}\right]<\infty
$$
follows from the finiteness of (\ref{domm}). The latter considerations and dominated convergence give
$$
\lim_{n\to\infty}\mathbb{E}^{x,x}_t\left[1_{\{t<\tau\}}\mathscr{V}_{n,t}^{\Phi}//_{t}^{\Phi,-1}\right]=\mathbb{E}^{x,x}_t\left[1_{\{t<\tau\}}\mathscr{V}_{t}^{\Phi}//_{t}^{\Phi,-1}\right],
$$
thus $\mathrm{tr}\left( \mathrm{e}^{-t H_{\Phi,V_n}}\right)\to \mathrm{tr}\left( \mathrm{e}^{-t H_{\Phi,V}}\right)$ as $n\to\infty$, and the proof is complete.
\end{proof}


The inequality (\ref{gold}) is of Golden-Thompson type (cf. \cite{Si05} for scalar Schr\"odinger operators in the Euclidean $\IR^l$, \cite{GKS-13} for scalar Schr\"odinger operators on discrete graphs, and \cite{Baer} for covariant Schr�dinger operators on closed manifolds). In order to derive such a bound in our setting, we could have also simply combined Theorem 5.3 in \cite{GKS-13} with (\ref{kerna}), which would have immediately produced the inequality
\begin{align}
\mathrm{tr}\left( \mathrm{e}^{-t H_{\Phi,V}}\right)\leq \mathrm{rank}(F)\sum_{x\in X}p(t,x,x) \mathrm{e}^{-t w(x)}m(x).\label{inee}
\end{align}
However, the latter bound is obviously much weaker than (\ref{gold}), the proof of which required a functional analytic localization argument. Nevertheless, (\ref{inee}) itself is also important, for it clarifies how Golden-Thompson type inequalities on vector bundles scale with respect to $\mathrm{rank}(\bullet)$.

\section{Proof of Theorem \ref{FK}}\label{beweis}

\subsection{Proof of Theorem \ref{FK} for finite subgraphs}\label{SS:pf-fki-finite}

Let $U\subset X$ be a finite set. In what follows, let $Q^{(U)}_{\Phi,V}$ be the symmetric sesqui-linear form in the finite dimensional Hilbert space
$$
\Gamma_{\ell^2_m}(U,F)  = \Gamma(U,F)=\Gamma_c(U,F),
$$
given by
\begin{align*}
Q^{(U)}_{\Phi,V}(f_1,f_2)= \>&\frac{1}{2}\sum_{\overset{x,y\in U}{ x\sm y}}b(x,y)\Big( f_1(x)-\Phi_{y,x} f_1(y),f_2(x)-\Phi_{y,x} f_2(y)\Big)_{x}\\
& + \sum_{x \in U} (V(x)f_1(x), f_2(x))_xm(x).
\end{align*}
The self-adjoint operator corresponding to $Q^{(U)}_{\Phi,V}$ will be denoted with $H^{(U)}_{\Phi,V}$. For $f\in\Gamma_{\ell^2_m}(U,F)$ define
\[
T^{(U)}_{t}f(x):=\mathbb{E}^x\left[1_{\{t<\tau_{U}\}}\mathscr{V}^{\Phi}_{t}//_{t}^{\Phi,-1}f(\mathbb{X}_{t})\right],\qquad x\in U,\quad t\geq 0,
\]
where $\tau_{U}<\tau$ denotes the first exit time of $\mathbb{X}$ from $U$. We are going to prove the following two assertions in this subsection:\vspace{1.2mm}

Claim (i): \emph{$(T^{(U)}_{t})_{t\geq 0}$ is a strongly continuous self-adjoint semigroup of (bounded) operators on $\Gamma_{\ell^2_m}(U,F)$.} \vspace{1.2mm}

Claim (ii): \emph{For all $f\in\Gamma_{\ell^2_m}(U,F)$, $x\in U$, we have}
$$
\lim_{t\to 0+}\frac{T^{(U)}_{t}f(x)-f(x)}{t}=-H^{(U)}_{\Phi,V}f(x).
$$

Then, as $U$ is a finite set, the formula from claim (ii) will in fact hold in the sense of $\Gamma_{\ell^2_m}(U,F)$, so that claim (i) will imply
\begin{align}
\mathrm{e}^{-t H^{(U)}_{\Phi,V}}f(x)=\mathbb{E}^x\left[1_{\{t<\tau_{U}\}}\mathscr{V}_{t}^{\Phi}//_{t}^{\Phi,-1}f(\mathbb{X}_{t})\right],\label{local}
\end{align}
for all $t\geq 0$, $f\in\Gamma_{\ell^2_m}(U,F)$, $x\in U$, which is the Feynman-Kac formula in this situation.\vspace{1.2mm}

\emph{Proof of claim} (i): In view of the finiteness of $U$, it is a trivial fact that $T^{(U)}_{t}\in \ILL(\Gamma_{\ell^2_m}(U,F))$. The proofs of the semigroup
property and the self-adjointness
property, respectively, become rather technical when compared with the scalar situation, as the involved processes do not commute anymore in general. To this end,
 let $r,s\geq 0$, $f,g\in\Gamma_{\ell^2_m}(U,F)$, $x\in U$.

In order to prove the semigroup property, we define the parallel transport
\begin{align*}
&\pa^{\Phi,r}: [0,\tau)\times \Omega\longrightarrow F \boxtimes F^*\\
&\pa^{\Phi,r}_{t}:=
\begin{cases}
&\mathbf{1}_{\mathbb{X}_r}, \>\text{ if $N(r+t)-N(r)=0$}\\\\
&\prod^{\longleftarrow}_{1\leq j\leq N(r+t)-N(r)}\Phi_{\mathbb{X}_{\tau_{j+N(r)-1}},\mathbb{X}_{\tau_{j+N(r)}}},\>\text{ else}
\end{cases}\in \mathrm{Hom}(F_{\mathbb{X}_r},F_{\mathbb{X}_{r+t}})
\end{align*}
along the $r$-shifted paths of $\mathbb{X}$ and furthermore the corresponding path ordered exponential
\begin{align*}
\mathscr{V}^{\Phi,r}:[0,\tau)\times\Omega\longrightarrow \mathrm{End}(F),\>\mathscr{V}^{\Phi,r}_t\in \mathrm{End}(F_{\mathbb{X}_r}),
\end{align*}
which is given by the unique pathwise solution of the initial value problem
\begin{align}
 \mathscr{V}^{\Phi,r}_{t} &=\mathbf{1}_{\mathbb{X}_r}- \int^t_0\mathscr{V}^{\Phi,r}_{\sigma}\pa^{\Phi,r,-1}_\sigma V(\mathbb{X}_{r+\sigma})\pa^{\Phi,r}_\sigma\Id
\sigma \>\>\text{ in $\mathrm{End}(F_{\mathbb{X}_r})$.}\nn
\end{align}
Using  $\pa^{\Phi}_{r+s}=\pa^{\Phi,r}_s\pa^{\Phi}_r$, one verifies that the two processes
$$
\mathscr{V}^{\Phi}_{r+\bullet}, \>\>\>\mathscr{V}^{\Phi}_{r}\pa^{\Phi,-1}_{r}\mathscr{V}^{\Phi,r}_{\bullet}\pa^{\Phi}_{r}
$$
both solve the same initial value problem, so that by a uniqueness argument we get the formula
$$
\mathscr{V}^{\Phi}_{r+s}\pa^{\Phi,-1}_{r+s}=\mathscr{V}^{\Phi}_{r}\pa^{\Phi,-1}_{r}\mathscr{V}^{\Phi,r}_{s}\pa^{\Phi,r,-1}_{s}.
$$
Now using the Markov property of $\IMM$ we can calculate

\begin{align*}
T^{(U)}_{r+s}f(x)&=\mathbb{E}^x\left[1_{\{r+s<\tau_U\}}\mathscr{V}^{\Phi}_{r}\pa^{\Phi,-1}_{r}\mathscr{V}^{\Phi,r}_{s}\pa^{\Phi,r,-1}_{s}f(\mathbb{X}_{r+s})\right]\\
&=\mathbb{E}^x\left[1_{\{r<\tau_U\}}\mathscr{V}^{\Phi}_{r}\pa^{\Phi,-1}_{r}\mathbb{E}^x\left[\left.1_{\{s<\tau_U\}}\mathscr{V}^{\Phi,r}_{s}\pa^{\Phi,r,-1}_{s}f(\mathbb{X}_{r+s})\right|\mathscr{F}_r\right]\right]\\
&=\mathbb{E}^x\left[1_{\{r<\tau_U\}}\mathscr{V}^{\Phi}_{r}\pa^{\Phi,-1}_{r}\mathbb{E}^{\mathbb{X}_r}\left[1_{\{s<\tau_U\}}\mathscr{V}^{\Phi}_{s}\pa^{\Phi,-1}_{s}f(\mathbb{X}_{s})\right]\right]\\
&=T^{(U)}_{r}T^{(U)}_{s}f(x).
\end{align*}

For the self-adjointness property, we define the parallel transport
\begin{align*}
&\pa^{\Phi,(r)}: [0,r]\times \{r<\tau\}\longrightarrow F \boxtimes F^*,\\
&\pa^{\Phi,(r)}_{t}:=\begin{cases}
&\mathbf{1}_{\mathbb{X}_r}, \>\text{ if $N(r)-N(r-t)=0$}\\\\
&\prod^{\longleftarrow}_{1\leq j\leq N(r)-N(r-t)}\Phi_{\mathbb{X}_{\tau_{N(r)-j+1}},\mathbb{X}_{\tau_{N(r)-j}}},\>\text{ else}
\end{cases}\in \mathrm{Hom}(F_{\mathbb{X}_r},F_{\mathbb{X}_{r-t}})
\end{align*}
along the $r$-reversed paths of $\mathbb{X}$, and the corresponding path ordered exponential
\begin{align*}
\mathscr{V}^{\Phi,(r)}: [0,r]\times \{r<\tau\}\longrightarrow \mathrm{End}(F),\>\mathscr{V}^{\Phi,(r)}_t\in \mathrm{End}(F_{\mathbb{X}_r}),
\end{align*}
which is given as the unique pathwise solution of the initial value problem
\begin{align}
 \mathscr{V}^{\Phi,(r)}_{t} &=\mathbf{1}_{\mathbb{X}_r}- \int^t_0\mathscr{V}^{\Phi,(r)}_{\sigma}\pa^{\Phi,(r),-1}_\sigma V(\mathbb{X}_{r-\sigma})\pa^{\Phi,(r)}_\sigma \Id \sigma \>\>\text{ in $\mathrm{End}(F_{\mathbb{X}_r})$.}
\end{align}
Then we have $\pa^{\Phi,(r)}_{\bullet}=\pa^{\Phi}_{r-\bullet}\pa^{\Phi,-1}_{r}$ on $[0,r]\times \{r<\tau\}$, and this formula straightforwardly implies
\begin{align}
\pa^{\Phi,(r)}_{r}\mathscr{V}^{\Phi,(r),*}_{r}=\mathscr{V}^{\Phi}_{r}\pa^{\Phi,-1}_{r}\>\text{ on $ \{r<\tau\}$}\label{for1}
\end{align}
(for example, by expanding $\mathscr{V}^{\Phi,(r)}_{r}$ into a path ordered exponential according to Proposition \ref{app1} and taking adjoints and
comparing the result with (\ref{poi})). Thus the self-adjointness of $T^{(U)}_{r}$ follows from the calculation
\begin{align*}
\left\langle T^{(U)}_{r}f,g\right\rangle_m&= \sum_{x\in U} \mathbb{E}^x\left[1_{\{r<\tau_U\}}\left(f(\mathbb{X}_r),\pa^{\Phi}_{r}\mathscr{V}^{\Phi,*}_{r}g(x)\right)_{\mathbb{X}_r}\right]m(x)\\
&=\sum_{x\in U} \mathbb{E}^x\left[1_{\{r<\tau_U\}}\left(f(x),\pa^{\Phi,(r)}_{r}\mathscr{V}^{\Phi,(r),*}_{r}g(\mathbb{X}_r)\right)_{x}\right]m(x)\\
&=\sum_{x\in U} \mathbb{E}^x\left[1_{\{r<\tau_U\}}\left(f(x),\mathscr{V}^{\Phi}_{r}\pa^{\Phi,-1}_{r}g(\mathbb{X}_r)\right)_{x}\right]m(x)\\
&=\left\langle f,T^{(U)}_{r}g\right\rangle_m,
\end{align*}
where we have used the reversibility of $\IMM$ for the second equality, and (\ref{for1}) for the third equality.

Finally, the strong continuity follows from the semigroup property and the strong continuity at $t=0$, which is a simple consequence of the path properties of $\mathbb{X}$.

\vspace{1.2mm}

\emph{Proof of claim} (ii): By the definitions of $T^{(U)}_{t}$ and $\mathbb{X}_{t}$ we can write
\begin{align}\label{E:conv-2}
\frac{1}{t}\left(T^{(U)}_{t}f(x)-f(x)\right)&=\frac{1}{t}\left(\mathbb{E}^x\left[1_{\{N(t)=0\}}\mathscr{V}^{\Phi}_{t}//_{t}^{\Phi,-1}f(\mathbb{X}_{t})\right]-f(x)
\right)\nonumber\\
&+\omega_{t}(x)+\rho_t(x),
\end{align}
where
\[
\omega_{t}(x):=\frac{1}{t}\mathbb{E}^x\left[1_{\{N(t)=1,\mathbb{X}_{\tau_1}\in U\}}
\mathscr{V}_{t}^{\Phi}//_{t}^{\Phi,-1}f(\mathbb{X}_{t})\right],
\]
\[
\rho_t(x):=\frac{1}{t}\mathbb{E}^x\left[1_{\{2\leq N(t)<\infty, \mathbb{X}_{\tau_{N(t)}}\in U\}}\mathscr{V}_{t}^{\Phi}//_{t}^{\Phi,-1}f(\mathbb{X}_{t})\right].
\]
Using Proposition \ref{app2} (i), the term $\rho_t(x)$ in~(\ref{E:conv-2}) may be estimated as
\begin{align}\label{E:conv-3-2}
|\rho_t(x)|\leq \mathrm{e}^{t\max_{x\in U}|V(x)|_{x}}\cdot \frac{1}{t}\mathbb{E}^x\left[1_{\{2\leq N(t)<\infty\}}|f(\mathbb{X}_{t})|\right],
\end{align}
which converges to $0$ as $t\to 0+$ by~\cite[Lemma 4.5]{GKS-13}. \\
We will now rewrite the first summand on the right hand side of~(\ref{E:conv-2}). By the definition of $\pa^{\Phi}$, we have $\pa^{\Phi}_s=\mathbf{1}_{x}$ for all $0\leq s\leq t$, $\mathbb{P}^x$-a.s. in $\{N(t)=0\}$, so that Proposition \ref{poe} (i) implies $\mathscr{V}_{t}^{\Phi}=\mathrm{e}^{-tV(x)}$, $\mathbb{P}^x$-a.s. in $\{N(t)=0\}$, and we get

\begin{align}\label{E:conv-3}
&\frac{1}{t}\left(\mathbb{E}^x\left[1_{\{N(t)=0\}}\mathscr{V}_{t}^{\Phi}//_{t}^{\Phi,-1}f(\mathbb{X}_{t})\right]-f(x)\right)\nn\\
&=
\frac{1}{t}\Big(\mathbb{P}^x(N(t)=0)\mathrm{e}^{-tV(x)}f(x)-f(x)\Big).
\end{align}
Additionally, note that
\begin{align}\label{E:conv-4}
\mathbb{P}^x(N(t)=0)=\mathbb{P}^x(t<\tau_{1})=\mathbb{P}^x(\textrm{deg}_{m,b}(x)t<\xi_1)=\mathrm{e}^{-\textrm{deg}_{m,b}(x)t},
\end{align}
where the last equality follows since  $\xi_1$ is an exponential random variable with parameter $1$.
Combining~(\ref{E:conv-3}) and~(\ref{E:conv-4}), we see that the first summand on the right hand side of~(\ref{E:conv-2}) converges to
$$
-\Big(V(x)f(x)+\textrm{deg}_{m,b}(x)f(x)\Big),\>\>\>\text{ as $t\to 0+$.}
$$
We now consider the summand $\omega_{t}(x)$ in~(\ref{E:conv-2}). For $y\in U$, by the definition of $//^{\Phi}$, we have $//_{t}^{\Phi,-1}=\Phi_{y,x}$, $\mathbb{P}^x$-a.s. in $\{N(t)=1,\mathbb{X}_{\tau_1}=y\}$. Hence,
\begin{align*}\label{E:conv-5}
\omega_{t}(x)=\frac{1}{t}\sum_{y\in U}\mathbb{E}^x\left[1_{\{N(t)=1,\mathbb{X}_{\tau_1}=y\}}
\mathscr{V}_{t}^{\Phi}\Phi_{y,x}f(y)\right]=I^{(1)}_{t}(x)+I^{(2)}_{t}(x),
\end{align*}
where
\[
I^{(1)}_{t}(x):=\frac{1}{t}\sum_{y\in U}\mathbb{E}^x\left[1_{\{N(t)=1,\mathbb{X}_{\tau_1}=y\}}
(\mathscr{V}_{t}^{\Phi}-\mathbf{1}_{x})\Phi_{y,x}f(y)\right],
\]
and
\[
I^{(2)}_{t}(x):=\frac{1}{t}\sum_{y\in U}\mathbb{E}^x\left[1_{\{N(t)=1,\mathbb{X}_{\tau_1}=y\}}\Phi_{y,x}f(y)\right].
\]
Note that we can write
\begin{align}\label{E:conv-5-1}
I^{(2)}_{t}(x)&=\sum_{y\in U}\frac{1}{t}\mathbb{P}^x(N(t)=1,\mathbb{X}_{\tau_1}=y)\Phi_{y,x}f(y).
\end{align}
By the proof of~\cite[Lemma 4.5]{GKS-13} it follows that
\begin{align}\label{E:conv-5-2-2-2}
\frac{1}{t}\mathbb{P}^x(N(t)=1,\mathbb{X}_{\tau_1}=y)\to \frac{b(x,y)}{m(x)},\qquad\textrm{as }t\to 0+,
\end{align}
which together with~(\ref{E:conv-5-1}) gives
\[
I^{(2)}_{t}(x)\to \frac{1}{m(x)}\sum_{y\in U}b(x,y)\Phi_{y,x}f(y),\qquad\textrm{as }t\to 0+.
\]
Next, using the pathwise continuity of $\mathscr{V}^{\Phi}$ together with~(\ref{E:conv-5-2-2-2}) we have
\begin{align*}
|I^{(1)}_{t}(x)|_x&\leq \sum_{y\in U}\frac{1}{t}\mathbb{P}^x(N(t)=1,\mathbb{X}_{\tau_1}=y)\left|\mathscr{V}_{t}^{\Phi}-\mathbf{1}_{x}\right|_{x}|\Phi_{y,x}f(y)|_x\\\nonumber
&\to 0,\qquad\textrm{as }t\to 0+;
\end{align*}
hence,
\begin{align*}
\omega_{t}(x)\to \frac{1}{m(x)}\sum_{y\in U}b(x,y)\Phi_{y,x}f(y),\qquad\textrm{as }t\to 0+.
\end{align*}
Going back to~(\ref{E:conv-2}) and taking the limit as $t\to 0+$, we arrive at
\begin{align*}
&\lim_{t\to0+}\frac{T^{(U)}_{t}f(x)-f(x)}{t}\nonumber\\
&=-\Big(V(x)f(x)+\textrm{deg}_{m,b}(x)f(x)\Big)+\frac{1}{m(x)}\sum_{\overset{y\in U}{y\sm x}}b(x,y)\Phi_{y,x}f(y)\\
&=-H^{(U)}_{\Phi,V}f(x),
\end{align*}
which is the claim.

\subsection{Proof of Theorem \ref{FK} in the general case}

The proof will be divided into three parts:\vspace{1.2mm}

Claim (i): \emph{Formula (\ref{fds}) holds, if $V\geq -C$ for some number $C>0$.} \vspace{1.2mm}

\emph{Proof}: By adding a constant to $V$ if necessary, we can assume $V\geq 0$.

In this case, $H_{\Phi,V}$ is well-defined as the self-adjoint semibounded operator corresponding to a densely defined sum of nonnegative closed quadratic forms.

Let $(X_n)_{n\in \mathbb{N}_{0}}$ be an exhausting sequence for $X$, let $Q^{(X_n)}_{\Phi,V}$ and $H^{(X_n)}_{\Phi,V}$ be as in Section~\ref{SS:pf-fki-finite} with $U=X_n$, let $i_{X_n}$ be the canonic inclusion operator
\[
i_{X_n}\colon \Gamma_{\ell_{m}^2}(X_n,F)\hookrightarrow  \Gamma_{\ell_{m}^2}(X,F),
\]
defined by extending sections to zero outside of the set $X_n$, and let $\pi_{X_n}:=i^{*}_{X_n}$ be the adjoint of $i_{X_n}$.
Now using~\cite[Theorem C.2]{GKS-13} with $q_n:=Q^{(X_n)}_{\Phi,V}$, $q:=Q_{\Phi,V}$ (where we rely on $V\geq 0$ for a monotone convergence argument for integrals), we get
\[
\lim_{n\to\infty}i_{X_n}\mathrm{e}^{-tH^{(X_n)}_{\Phi,V}}\pi_{X_n}f(x)=\mathrm{e}^{-tH_{\Phi,V}}f(x);
\]
hence, in view of (\ref{local}), it remains to show that
\begin{align}\label{E:to-show-fki}
&\lim_{n\to\infty}\mathbb{E}^x\left[1_{\{t<\tau_{X_n}\}}\mathscr{V}_{t}^{\Phi}//_{t}^{\Phi,-1}\pi_{X_n}f(\mathbb{X}_{t})\right]\nonumber\\
&=\mathbb{E}^x\left[1_{\{t<\tau\}}\mathscr{V}_{t}^{\Phi}//_{t}^{\Phi,-1}f(\mathbb{X}_{t})\right].
\end{align}
Note that the sequence $\tau_{X_n}$ converges to $\tau$ in a monotone increasing way. Additionally, using $V\geq 0$ we get
\begin{equation*}\label{E:abs-fact-1}
1_{\{t<\tau_{X_n}\}}|\mathscr{V}_{t}^{\Phi}|_{x}\leq 1_{\{t<\tau\}}\>\>\text{ $\mathbb{P}^x$-a.s.}
\end{equation*}
from Proposition \ref{app2} (i); hence
\[
\left|1_{\{t<\tau_{X_n}\}}\mathscr{V}_{t}^{\Phi}//_{t}^{\Phi,-1}\pi_{X_n}f(\mathbb{X}_{t})\right|_x\leq
1_{\{t<\tau\}}|f|(\mathbb{X}_{t})\>\>\text{ $\mathbb{P}^x$-a.s..}
\]
Finally, as we have
\[
\mathbb{E}^x\left[1_{\{t<\tau\}}|f|(\mathbb{X}_{t})\right]=\mathrm{e}^{-tH}|f|(x)<\infty,
\]
(\ref{E:to-show-fki}) follows from dominated convergence.

\vspace{1.2mm}

Claim (ii): \emph{If $V$ admits a decomposition $V=V^{+}-V^{-}$ into potentials $V^{\pm}\geq 0$ such that $Q_{|V^-|}$ is $Q$-bounded with bound $<1$, then $Q_{V^{-}}$ is $Q_{\Phi,0}$-bounded with bound $<1$.} \vspace{1.2mm}

\emph{Proof}: By claim (i) we have
$$
\left|\mathrm{e}^{-tH_{\Phi,0}}f(x)\right|_x\leq \mathrm{e}^{-tH}|f|(x),
$$
which implies the assertion by an abstract functional analytic fact (cf. Theorem D.6 (b) in \cite{G2}).\vspace{1.2mm}

Claim (iii): \emph{Formula (\ref{fds}) holds for general $V$.} \vspace{1.2mm}

\emph{Proof}: As this case can be carried out precisely with the arguments from pp. 4648 - 4649 in \cite{G2}, we only sketch the proof:

Using the spectral calculus on the fibers of $F\to X$, we can define a sequence of potentials by $V_n(x):=\max(-n ,V(x))$. Then we have $-n\leq V_n$ for all $n$, and $V_n\searrow V$ as $n\to\infty$ pointwise in the sense of self-adjoint operators, and using convergence of monotonely decreasing quadratic forms we get
\[
\lim_{n\to\infty}\mathrm{e}^{-tH_{\Phi,V_n}}f(x)=\mathrm{e}^{-tH_{\Phi,V}}f(x).
\]
Furthermore, by claim (i) we have
\begin{align}
\mathrm{e}^{-tH_{\Phi,V_n}}f(x)=\mathbb{E}^x\left[1_{\{t<\tau\}} \mathscr{V}_{n,t}^{\Phi}//_{t}^{\Phi,-1}f(\mathbb{X}_{t})\right],\nn
\end{align}
so that it remains to prove
\begin{align*}
\lim_{n\to\infty}\mathbb{E}^x\left[1_{\{t<\tau\}} \mathscr{V}_{n,t}^{\Phi}//_{t}^{\Phi,-1}f(\mathbb{X}_{t})\right]=\mathbb{E}^x\left[1_{\{t<\tau\}} \mathscr{V}_{t}^{\Phi}//_{t}^{\Phi,-1}f(\mathbb{X}_{t})\right],
\end{align*}
where $\mathscr{V}_{n,t}^{\Phi}$ is defined by~(\ref{int}) with $V$ replaced by $V_n$.

The latter follows from dominated convergence: Indeed, Proposition \ref{app2} (ii) implies
\begin{align*}
&1_{\{t<\tau\}}\left|\mathscr{V}_{n,t}^{\Phi}//_{t}^{\Phi,-1}f(\mathbb{X}_{t})- \mathscr{V}_{t}^{\Phi}//_{t}^{\Phi,-1}f(\mathbb{X}_{t})\right|_x\nn\\
&\leq 1_{\{t<\tau\}}\left|\mathscr{V}_{n,t}^{\Phi}- \mathscr{V}_{t}^{\Phi}\right|_x|f(\mathbb{X}_{t})|_{\mathbb{X}_{t}}
\to 0\>\>\text{ $\mathbb{P}^x$-a.s.,}
\end{align*}
and defining $w:X\to\IR$ by (\ref{sxs}), we have
\[
\left|1_{\{t<\tau\}}\mathscr{V}_{n,t}^{\Phi}//_{t}^{\Phi,-1}f(\mathbb{X}_{t})\right|_x\leq
1_{\{t<\tau\}}\mathrm{e}^{-\int^t_0w(\mathbb{X}_{s})\Id s}|f|(\mathbb{X}_{t})\>\>\text{ $\mathbb{P}^x$-a.s..}
\]
Finally, the latter random variable satisfies
\[
\mathbb{E}^x\left[1_{\{t<\tau\}}\mathrm{e}^{-\int^t_0w(\mathbb{X}_{s})\Id s}|f|(\mathbb{X}_{t})\right]=\mathrm{e}^{-tH_{0,w}}|f|(x)<\infty
\]
by the scalar Feynman-Kac formula (cf. Theorem 4.1 in \cite{GKS-13}), and the proof is complete.$\hfill\blacksquare$

\subsection*{Acknowledgements}
The first author (B.G.) would like to thank Matthias Keller for many helpful discussions on graphs. B.G has been financially supported by the
SFB~647 ``Space---Time---Matter''.

\appendix

\section{Appendix}

For the convenience of the reader, we record some facts on linear initial value problems here. Let $\IHH$ be a real or complex, finite dimensional Hilbert space, where as usual $\|\bullet\|$ will denote both, the underlying norm and operator norm. We fix some $\tau\in (0,\infty]$ for the following results.

\begin{Proposition}\label{app1} Let $A\in \mathsf{L}^1_{\mathrm{loc}}([0,\tau),\ILL(\IHH))$ and let $0\leq t< \tau$.

\emph{(i)} There is a unique solution $\mathscr{A}:[0,\tau)\to\ILL(\IHH)$ of the initial value problem
$$
\mathscr{A}(t)=\mathbf{1}+\int^t_0 \mathscr{A}(s)A(s) \Id s.
$$
In fact, one has
\begin{align}
\mathscr{A}(t)=\mathbf{1}+\sum_{n=1}^{\infty}\int_{t\sigma_n}A(s_1)\cdots A(s_n) \,\Id s_1\,\cdots\,\Id s_n ,
\end{align}
where
\[
t\sigma_n:=\{(s_1,s_2,\dots,s_n)| \>0\leq s_1\leq s_2\leq\cdots\leq s_n\leq t\},
\]
and where the series converges absolutely and locally uniformly in $t$. More specifically, for all $n\in \IN$ one has
\begin{align}
&\int_{t\sigma_n}\left\|A(s_1)\cdots A(s_n)\right\| \Id s_1\cdots\Id s_n \leq \f{\left(\int^t_0 \|A(s)\|\Id s\right)^n}{n!}.\nn
\end{align}

\emph{(ii)} $\mathscr{A}^*$ is the unique solution of the initial value problem
$$
\mathscr{A}^*(t)=\mathbf{1}+\int^t_0 A^*(s)\mathscr{A}^*(s) \Id s.
$$

\emph{(iii)} $\mathscr{A}$ is invertible and $\mathscr{A}^{-1}$ is the unique solution of the initial value problem
$$
\mathscr{A}^{-1}(t)=\mathbf{1}-\int^t_0 A(s)\mathscr{A}^{-1}(s) \Id s.
$$
\end{Proposition}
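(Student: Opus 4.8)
The plan is to establish (i) by Picard iteration, and then to deduce (ii) and (iii) by formal manipulations whose rigour rests on the uniqueness statement of (i) together with its obvious left-handed analogue (i.e. the same statement for equations of the form $\mathscr{C}(t)=\mathbf{1}+\int_0^t A(s)\mathscr{C}(s)\,\Id s$, proved verbatim in the same way). For (i), first I would set $\mathscr{A}_0(t):=\mathbf{1}$ and $\mathscr{A}_k(t):=\mathbf{1}+\int_0^t\mathscr{A}_{k-1}(s)A(s)\,\Id s$, and check by induction that $\mathscr{A}_k(t)-\mathscr{A}_{k-1}(t)=\int_{t\sigma_k}A(s_1)\cdots A(s_k)\,\Id s_1\cdots\Id s_k$; the ordering of the factors (smallest time variable on the left) is forced by the right multiplication $\mathscr{A}_{k-1}(s)A(s)$ in the recursion. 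Since $\|A(s_1)\|\cdots\|A(s_k)\|$ is symmetric under permutations of the coordinates and the $k!$ permuted copies of $t\sigma_k$ tile $[0,t]^k$ up to a null set, one obtains the a priori bound $\int_{t\sigma_k}\|A(s_1)\cdots A(s_k)\|\,\Id s\le \frac{1}{k!}\big(\int_0^t\|A(s)\|\,\Id s\big)^k$, so that the series $\mathbf{1}+\sum_{k\ge1}(\mathscr{A}_k-\mathscr{A}_{k-1})$ is dominated by $\exp\int_0^t\|A(s)\|\,\Id s$ and hence converges absolutely and uniformly on compact subintervals of $[0,\tau)$ (finiteness of $\int_0^t\|A\|$ being exactly the $\mathsf{L}^1_{\mathrm{loc}}$ hypothesis). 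Its sum $\mathscr{A}$ is therefore continuous, indeed locally absolutely continuous, and passing to the limit in the recursion (dominated convergence, using uniform convergence of $\mathscr{A}_{k-1}$ on $[0,t]$ and $A\in\mathsf{L}^1[0,t]$) shows $\mathscr{A}$ solves the integral equation and has the asserted series form.

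Uniqueness in (i) is a Gr\"onwall argument: if $D$ is the difference of two solutions, then $D$ is continuous, hence bounded by some $M$ on any $[0,T]\subset[0,\tau)$, and iterating $D(t)=\int_0^t D(s)A(s)\,\Id s$ gives $\|D(t)\|\le M\big(\int_0^T\|A\|\big)^k/k!\to0$ for $t\le T$. For (ii) I would simply take Hilbert-space adjoints in $\mathscr{A}(t)=\mathbf{1}+\int_0^t\mathscr{A}(s)A(s)\,\Id s$: the adjoint is isometric, hence continuous, hence commutes with the Bochner integral, and $(\mathscr{A}(s)A(s))^*=A^*(s)\mathscr{A}^*(s)$, so $\mathscr{A}^*$ solves $\mathscr{A}^*(t)=\mathbf{1}+\int_0^tA^*(s)\mathscr{A}^*(s)\,\Id s$; since this left-multiplication Volterra equation has a unique solution (left-handed analogue of (i)), $\mathscr{A}^*$ is \emph{the} solution.

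For (iii) I would introduce $\mathscr{B}$ as the unique solution of $\mathscr{B}(t)=\mathbf{1}-\int_0^t A(s)\mathscr{B}(s)\,\Id s$ (existence and uniqueness again by the left-handed analogue of (i), with kernel $-A$). Both $\mathscr{A}$ and $\mathscr{B}$ are locally absolutely continuous with $\mathscr{A}'(t)=\mathscr{A}(t)A(t)$ and $\mathscr{B}'(t)=-A(t)\mathscr{B}(t)$ for a.e.\ $t$, so by the product rule for locally absolutely continuous $\ILL(\IHH)$-valued maps the function $t\mapsto\mathscr{A}(t)\mathscr{B}(t)$ is locally absolutely continuous with a.e.\ derivative $\mathscr{A}(t)A(t)\mathscr{B}(t)-\mathscr{A}(t)A(t)\mathscr{B}(t)=0$; hence $\mathscr{A}(t)\mathscr{B}(t)\equiv\mathscr{A}(0)\mathscr{B}(0)=\mathbf{1}$. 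Because $\dim\IHH<\infty$, a right inverse is automatically a two-sided inverse, so $\mathscr{A}(t)$ is invertible with $\mathscr{A}(t)^{-1}=\mathscr{B}(t)$, and by construction this solves — uniquely — the stated initial value problem.

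The argument is essentially routine; the points that actually require care are (a) the bookkeeping of the order of the factors $A(s_i)$ in the simplex integrals, which is precisely what distinguishes the three equations in (i), (ii) and (iii); (b) justifying the interchange of limit/series and integral, and the Leibniz rule, in the merely-absolutely-continuous (not $C^1$) category forced by the $\mathsf{L}^1_{\mathrm{loc}}$ coefficient; and (c) observing explicitly that (ii) and (iii) invoke the left-handed version of (i), whose proof is obtained from the above by interchanging the roles of left and right multiplication. I expect (b), combined with keeping (c) honest, to be the only mildly delicate part.
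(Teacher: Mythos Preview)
Your argument is correct and self-contained. The paper itself does not prove this proposition: it simply states that ``We refer the reader to the appendix~C of \cite{G2} and the references therein for the proofs of these statements.'' What you have written is precisely the standard Picard--Volterra argument one expects to find behind that citation, so there is no substantive difference in approach to report; you have merely filled in what the paper outsources. Your handling of the three potentially delicate points --- the ordering of factors in the simplex integrals, the passage to the limit under only an $\mathsf{L}^1_{\mathrm{loc}}$ hypothesis, and the explicit appeal to the left-handed analogue of (i) for parts (ii) and (iii) --- is accurate, and the use of $\dim\IHH<\infty$ to promote the right inverse in (iii) to a two-sided inverse is legitimate in the setting of the appendix.
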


\begin{Proposition}\label{app2} Let $A\in \mathsf{L}^1_{\mathrm{loc}}([0,\tau),\ILL(\IHH))$ and let $0\leq t< \tau$.

\emph{(i)} Assume that $A(\bullet)$ is self-adjoint and that there is a real-valued function $C\in \mathsf{L}^1[0,t]$ such that for all $h\in \IHH$ one has
$$
\left\langle A(\bullet)h,h\right\rangle \leq C(\bullet)\|h\|^2 \>\text{ a.e. in $[0,t]$.}
$$
Then it holds that
$$
\left\|\mathscr{A}(t)\right\|\leq\mathrm{e}^{\int^t_0 C(s)\Id s}.
$$

\emph{(ii)} Let $\tilde{A}\in \mathsf{L}^1_{\mathrm{loc}}([0,\tau),\ILL(\IHH))$. Then (with an obvious notation) one has
\begin{align*}
&\left\|\mathscr{A}(t)-\mathscr{\tilde{A}}(t)\right\|\leq \mathrm{e}^{2\int^t_0 \|A(s)\|\Id s+\int^t_0 \|\tilde{A}(s)\|\Id s}\int^t_0 \|A(s)-\tilde{A}(s)\|\Id s.
\end{align*}
\end{Proposition}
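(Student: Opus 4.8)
The plan is to reduce both matrix-valued Volterra equations to scalar differential inequalities and to close with Gr\"onwall's lemma, using only Proposition \ref{app1}.

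\emph{Part (i).} Since $\|\mathscr{A}(t)\|=\|\mathscr{A}(t)^*\|$, I would first note that it suffices to bound $|\mathscr{A}(t)^*h|$ for an arbitrary $h\in\IHH$. By Proposition \ref{app1} (ii) together with the (a.e.) self-adjointness of $A$, the vector $u(t):=\mathscr{A}(t)^*h$ solves the linear integral equation
\[
u(t)=h+\int^t_0 A(s)u(s)\,\Id s .
\]
As $\mathscr{A}^*$ is locally bounded (by the series estimate in Proposition \ref{app1} (i)) and $A\in\mathsf{L}^1_{\mathrm{loc}}$, the integrand lies in $\mathsf{L}^1_{\mathrm{loc}}$, so $u$ is locally absolutely continuous with $u'(s)=A(s)u(s)$ for a.e.\ $s$. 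Hence $s\mapsto|u(s)|^2$ is locally absolutely continuous and, for a.e.\ $s$,
\[
\frac{\Id}{\Id s}\,|u(s)|^2=2\,\mathrm{Re}\,(u'(s),u(s))=2\,(A(s)u(s),u(s))\leq 2C(s)|u(s)|^2 ,
\]
where the middle equality uses $A(s)=A(s)^*$ and the inequality is the hypothesis on $A$. Integrating from $0$ and applying Gr\"onwall's lemma gives $|u(t)|^2\leq|h|^2\mathrm{e}^{2\int^t_0 C(s)\,\Id s}$, i.e.\ $|\mathscr{A}(t)^*h|\leq|h|\,\mathrm{e}^{\int^t_0 C(s)\,\Id s}$, which is the claim.

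\emph{Part (ii).} I would subtract the defining integral equations for $\mathscr{A}$ and $\tilde{\mathscr{A}}$ and insert, under the integral sign, the elementary identity
\[
\mathscr{A}(s)A(s)-\tilde{\mathscr{A}}(s)\tilde A(s)=\mathscr{A}(s)\big(A(s)-\tilde A(s)\big)+\big(\mathscr{A}(s)-\tilde{\mathscr{A}}(s)\big)\tilde A(s).
\]
Writing $D(t):=\|\mathscr{A}(t)-\tilde{\mathscr{A}}(t)\|$ and using the a priori bound $\|\mathscr{A}(s)\|\leq\mathrm{e}^{\int^s_0\|A\|}\leq\mathrm{e}^{\int^t_0\|A\|}$ from Proposition \ref{app1} (i), one obtains
\[
D(t)\leq \mathrm{e}^{\int^t_0\|A(s)\|\,\Id s}\int^t_0\|A(s)-\tilde A(s)\|\,\Id s+\int^t_0\|\tilde A(s)\|\,D(s)\,\Id s .
\]
Since the first term is non-decreasing in $t$, Gr\"onwall's lemma yields $D(t)\leq \mathrm{e}^{\int^t_0\|A\|+\int^t_0\|\tilde A\|}\int^t_0\|A-\tilde A\|$, which is in fact slightly stronger than the asserted bound; the displayed inequality then follows at once from $\mathrm{e}^{\int^t_0\|A\|}\leq\mathrm{e}^{2\int^t_0\|A\|}$.

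\emph{Main obstacle.} The arguments are essentially soft; the only points that need attention are the regularity bookkeeping — that $\mathscr{A}$, $\tilde{\mathscr{A}}$ and $u$ are locally absolutely continuous, so that the fundamental theorem of calculus and the a.e.\ differentiation of $|u|^2$ are legitimate (all guaranteed by the integral equations and the local boundedness from Proposition \ref{app1} (i)) — and checking that the hypotheses of Gr\"onwall's lemma hold, namely that the weights $C$, $\|A\|$, $\|\tilde A\|$ are locally integrable and the inhomogeneous term in (ii) is non-decreasing; both are immediate, so no real difficulty arises, and the generous constant in (ii) leaves additional room.
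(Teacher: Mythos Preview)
Your proof is correct and self-contained. The paper itself does not prove Proposition~\ref{app2}: it simply refers the reader to Appendix~C of~\cite{G2} and the references therein. Your Gr\"onwall-based arguments are exactly the standard route, and in part~(ii) you in fact obtain the sharper bound $\mathrm{e}^{\int^t_0\|A\|+\int^t_0\|\tilde A\|}\int^t_0\|A-\tilde A\|$, from which the paper's stated inequality with the extra factor $\mathrm{e}^{\int^t_0\|A\|}$ follows trivially. The regularity bookkeeping you flag (local absolute continuity of $u$ and of $|u|^2$, integrability of the Gr\"onwall weights) is indeed routine and handled by the local boundedness from Proposition~\ref{app1}~(i).
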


We refer the reader to the appendix C of \cite{G2} and the references therein for the proofs of these statements.

\end{document}